\documentclass{article}
\usepackage{amsmath,amsfonts,amssymb,amsthm}
\usepackage{paralist}
\usepackage{booktabs}
\usepackage[round]{natbib}
\usepackage{graphicx}

\allowdisplaybreaks
\newcommand{\sumdot}{\text{\tiny$\bullet$}}

\newcommand{\bsa}{\boldsymbol{a}}
\newcommand{\bsb}{\boldsymbol{b}}

\newcommand{\bsx}{\boldsymbol{x}}
\newcommand{\bsy}{\boldsymbol{y}}
\newcommand{\bsz}{\boldsymbol{z}}

\newcommand{\real}{\mathbb{R}}
\newcommand{\natu}{\mathbb{N}}
\newcommand{\tran}{\mathsf{T}}
\newcommand{\dnorm}{\mathcal{N}}

\renewcommand{\ge}{\geqslant}
\renewcommand{\le}{\leqslant}
\renewcommand{\emptyset}{\varnothing}

\newcommand{\ck}{\mathcal{K}}
\newcommand{\cp}{\mathcal{P}}
\newcommand{\ct}{\mathcal{T}}
\newcommand{\cx}{\mathcal{X}}

\newcommand{\law}{\mathcal{L}}

\newcommand{\sphd}{\mathbb{S}^{d-1}}

\newcommand{\e}{\mathbb{E}}
\newcommand{\var}{\mathrm{Var}}
\newcommand{\cov}{\mathrm{Cov}}
\newcommand{\cor}{\mathrm{Corr}}
\newcommand{\cv}{\mathrm{cv}}

\newcommand{\rd}{\,\mathrm{d}}

\newcommand{\one}{\boldsymbol{1}}

\newcommand{\dustd}{\mathbb{U}}

\newcommand{\simiid}{\stackrel{\mathrm{iid}}\sim}

\newcommand{\phm}{\phantom{-}}

\newcommand{\umu}{\underline{\mu}}

\newcommand{\ak}{\mathrm{AK}}

\newtheorem{theorem}{Theorem}
\newtheorem{lemma}{Lemma}
\newtheorem{corollary}{Corollary}

\theoremstyle{definition}
\newtheorem{remark}{Remark}

\author{
Art B. Owen\\Stanford University
\and
Yury Maximov\\ T-4 and CNLS, Los Alamos National Laboratory
\and
Michael Chertkov\\ T-4 and CNLS, Los Alamos National Laboratory 
}
\date{December 2018}
\title{Importance sampling the union of rare events with
an application to power systems analysis}

\begin{document}
\maketitle
\begin{abstract}
We consider importance sampling to estimate the probability $\mu$ of a union of $J$ rare events $H_j$
defined by a random variable $\bsx$.
The sampler we study has been used in spatial statistics, genomics and combinatorics
going back at least to Karp and Luby (1983).
It works by sampling one event at random, then sampling $\bsx$ conditionally
on that event happening and it constructs an unbiased estimate of $\mu$
by multiplying an inverse moment of the number of occuring events by the union bound.
We prove some variance bounds for this sampler. For a sample size of $n$,
it has a variance no larger than $\mu(\bar\mu-\mu)/n$ where $\bar\mu$
is the union bound.
It also has a coefficient of variation no larger than 
$\sqrt{(J+J^{-1}-2)/(4n)}$ regardless of the overlap pattern among the $J$ events.
Our motivating problem comes from power system reliability, where the phase
differences between connected nodes have a joint Gaussian distribution and
the $J$ rare events arise from unacceptably large phase differences.
In the grid reliability problems even some events defined by $5772$ constraints
in $326$ dimensions, with probability below $10^{-22}$,
are estimated with a coefficient of variation of about $0.0024$ with only $n=10{,}000$
sample values.
\end{abstract}

\section{Introduction}

In this paper we consider a mixture importance sampling strategy
to estimate the probability that one or more of a set of rare events takes place.
The sampler repeatedly chooses a rare event at random, 
and then samples the system conditionally on that one event taking place.
For each such sample, the total number of occuring events is recorded
and a certain reciprocal moment of them is used in the estimate.

This method is a special case of an algorithm in~\cite{adle:blan:liu:2008, adle:blan:liu:2012}
for computing exceedance probabilities of Gaussian random fields.
It was used earlier by \cite{shi:sieg:yaki:2007}  and \cite{naim:prie:2001}
for extrema of genomic scan statistics.
\cite{prie:naim:cope:2001} used it for extrema of some spatial statistic
involving marked point processes.  The earliest uses that we know 
are in the computer science literature for enumeration problems like
estimating the cardinality of the union of a given
list of finite sets.  See \cite{karp:luby:1983} and \cite{frig:verc:1985}.
The above cited papers refer to this method as importance sampling.
To distinguish it from other samplers, we will call it ALOE
for ``At Least One rare Event''.  

We develop general bounds for the variance of the 
ALOE importance sampler, and for its coefficient of variation. 
It has a sampling standard deviation
that is no more than some modest multiple of the event
probability.  This is an especially desirable property in
rare event settings. For background on importance sampling
of rare events see \cite{lecu:mand:tuff:2009}.

Our motivating context is the reliability of the electrical grid when
subject to random inputs, such as variable demand by users and
variable production, as occurs at wind farms. 
The rare events describe unacceptably large electrical phase differences at pairs
of connected nodes in the grid.

It is common to use a simplified linear direct current (DC) model of
the electrical grid, because the equations describing alternating current (AC)
are significantly more difficult to work with, and some authors
(e.g., \cite{van2014dc}) find that there is little to be gained from the complexity of an AC model.  
This DC model is presented in \cite{sauer1984active} and \cite{stott2009dc}.
It is also common to model the randomness in the grid
as Gaussian, especially over short time horizons. 

We make both of these simplifications: linearity and Gaussianity.
The probability we consider can then be written
\begin{align}\label{eq:asgaussian}
\mu=\Pr\bigl( \cup_{j=1}^J H_j\bigr),\quad 
H_j =\{\bsx^\tran\omega_j \ge\tau_j\},
\quad\text{where $\bsx\sim\dnorm(\eta,\Sigma)$}.
\end{align}
Section~\ref{sec:gaussian} introduces more notation
for problem~\eqref{eq:asgaussian} and develops the ALOE
sampler as an especially convenient version of mixture importance sampling.
In this setting we can compute the union bound 
$\bar\mu=\sum_{j=1}^J\Pr(H_j)\ge\mu.$
Theorem~\ref{thm:muvar} proves that
the ALOE estimate $\hat\mu$
has variance at most $\mu(\bar\mu-\mu)/n$ 
when $n$ IID samples are used.   This can be much smaller than $\mu(1-\mu)/n$
which arises from sampling the nominal distribution of $\bsx$.
Section~\ref{sec:isamp} discusses some further sampling properties of our estimator
that hold without the Gaussian assumption.
When there are $J$ events, the variance of $\hat\mu$ is
at most $(J+J^{-1}-2)\mu^2/(4n)$ when the system is sampled $n$ times.
Section~\ref{sec:comparison} compares ALOE
to a state of the art code {\tt mvtnorm}  \citep{mvtnorm}
for estimating the probability that a multivariate
Gaussian of up to $1000$ variables with arbitrary covariance belongs to a given hyperrectangle.
ALOE is simpler and extends to higher dimensions.
When we studied rare event cases, ALOE
was more accurate.  In our examples that are not rare events,
{\tt mvtnorm} was more accurate.
We also make a comparison to a directional sampling method studied
recently by \cite{ahn:kim:2018}.  That method is far better than ALOE
on our low dimensional test problems but very seriously underestimates
the rare event probability on  our high dimensional test problems.
Section~\ref{sec:power} describes the power system application.
Section~\ref{sec:discussion} contains some discussions.
The appendix proves 
Theorem~\ref{thm:muvar} for any set of $J$ events, not just
those given by a Gaussian distribution. The theorem applies
so long as we can sample conditionally on any one event $H_j$
and then determine which other events $H_\ell$ also occur.
We finish this section with some comments and some references.

One common way for rare event sampling to be inaccurate is that
we might fail to obtain any points where the rare event happens. That leads
to a severe under-estimation of the rare event probability. In ALOE, the
corresponding problem is the failure to sample any points where two or more
of the rare constituent events occur. In that case ALOE will return the union
bound as the estimated rare event probability instead of zero.  That is also a setting
where the union bound is likely to be a good approximation.  So ALOE is robust
against severe underestimates of the rare event probability.
The second common problem for rare event sampling is an extreme value of
the likelihood ratio weighting applied to the observations.  
In ALOE, the largest possible
weight is only $J$ times as large as the smallest one.

Our sampler is closely related to instanton methods in
power systems engineering. See 
Chertkov, Pan et al.\ (2011), \nocite{11CPS}
Chertkov, Stepanov et al.\ (2011), \nocite{11CSPB} and \cite{15KHCBB}.  
Out of all the configurations of random
inputs to a system, the most probable one causing the failure is
called the instanton. When there are thousands of failure types
there are correspondingly thousands of instantons, each one a conditional mode of the 
distribution of $\bsx$.
Our initial thought was to do importance sampling from a mixture of distributions,
with each mixture component defined by shifting the Gaussian distribution's mean to an instanton.
By sampling conditionally on an event,
ALOE avoids wasting samples outside the failure region.  By conditioning instead
of shifting, we get better control over the likelihood ratio in the importance sampler. 

ALOE is a form of multiple importance sampling. 
Multiple importance sampling originated in computer graphics \citep{lafo:will:1993,veac:guib:1994}.
\cite{owen:zhou:2000} found a useful way to combine it with control variates
defined by the mixture components.
\cite{elvi:mart:leun:buga:2015,elvi:mart:leun:buga:2015:arxiv} investigate computational efficiency
of some mixture importance sampling and weighting strategies.

We do not consider self-normalized importance sampling  (SNIS) in
this paper. SNIS is useful in settings where we can compute an 
unnormalized version of our target density but cannot sample from 
it efficiently, if at all. SNIS is common in  Bayesian applications
\citep[Chapter 2]{liu:2001}.
For a recent adaptive version of SNIS, see \cite{corn:mari:mira:robe:2012}.
For rare event estimation, we show in the Appendix that
self-normalized importance sampling cannot deliver a coefficient
of variation meaningfully below $2/\sqrt{n}$ asymptotically.
The optimal sampler for SNIS allocates precisely half of its
probability in the rare event and half outside of it.
The optimal plain IS estimator, by contrast,
places all of its probability on the rare event and has zero variance.  
Ordinary importance
sampling can attain much smaller variances, and so we focus on it
for the rare event problem.

\section{Gaussian case}\label{sec:gaussian}
For concreteness, we present ALOE first for Gaussian random variables.
The earliest use we have seen for Gaussian variables is \cite{adle:blan:liu:2008}.
We let $\bsx\in\real^d$ have the standard Gaussian distribution, $\dnorm(0,I)$,
deferring general Gaussians to Section~\ref{sec:generalgaus}.
We are interested in computing the probability 
that $\bsx$ lies outside a polytope $\cp$. 
In our motivating applications, 
the interior of the polytope defines a safe operating 
region and we assume that $\bsx\not\in \cp$ is a rare 
event. 
For $j=1,\dots,J$, define half-spaces
$$H_j = \{\bsx\mid \omega_j^\tran\bsx\ge\tau_j\}$$
where each $\tau_j\in\real$ and $\omega_j\in\real^d$, with $\omega_j^\tran\omega_j=1$. 
Then $\cp = \cap_{j=1}^JH_j^c$ and we want 
to find $\mu=\Pr(\bsx \in H)$ where $H=\cup_{j=1}^JH_j=\cp^c$. 
The set $\cp$ is convex and  not necessarily bounded.
Ordinarily $\tau_j>0$, because we are interested in rare events.

\begin{figure}[t]
\includegraphics[width=.9\hsize]{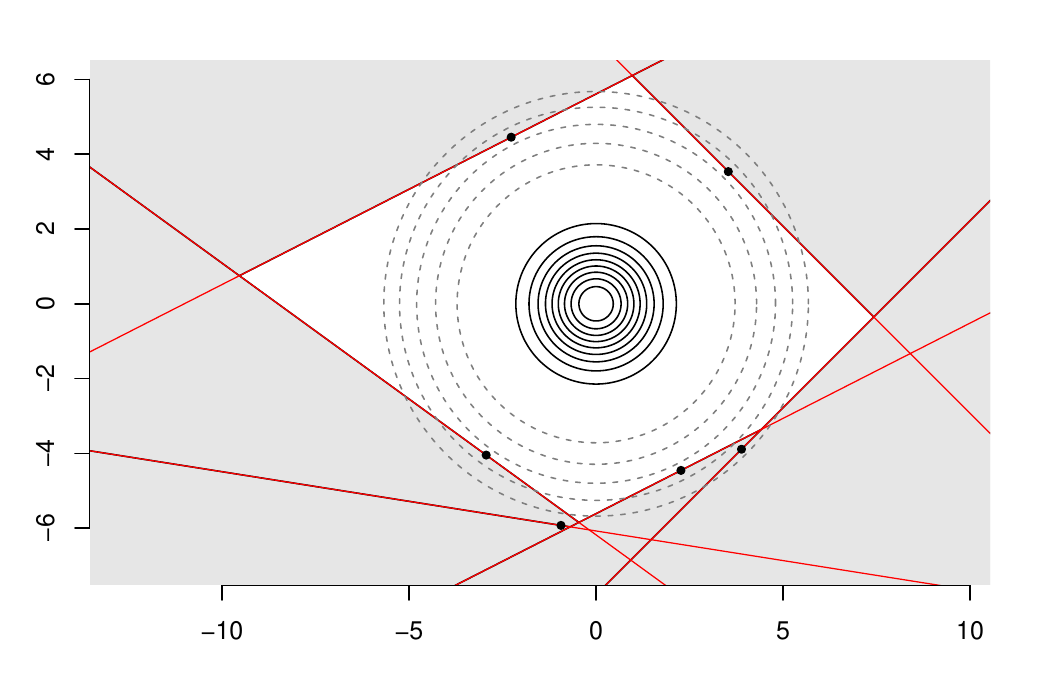}
\caption{\label{fig:gaustope}
The solid circles contain $10$\%, $20$\% up to $90$\% of
the $\dnorm(0,I)$ distribution. The dashed circles contain
all but $10^{-k}$ of the probability for $3\le k\le 7$. The six
solid lines denote half-spaces.  The solid points are the corresponding
conditional modes (instantons). The rare event of interest is $\bsx$
in the shaded region, when $\bsx\sim\dnorm(0,I)$.
}
\end{figure}

The setting is illustrated in Figure~\ref{fig:gaustope} for $J=6$ half-spaces.
In that example, two of the  half-spaces have their
conditional modes inside the union of the other half-spaces. 
One of those  half-spaces is entirely included in the union of the others.

Letting $P_j=\Pr(\bsx\in H_j) =\Phi(-\tau_j)$, we know  that
\begin{align}\label{eq:mubounds}
\max_{1\le j\le J}P_j=:\umu \le \mu \le \bar\mu 
:=\sum_{j=1}^JP_j.
\end{align}
The right hand side is the union bound which is sometimes
very conservative and sometimes quite accurate.

We will need to use some inclusion-exclusion formulas, so 
some notation for these follows. 
For any $u\subseteq 1{:}J\equiv \{1,2,\dots,J\}$, 
let $H_u = \cup_{j\in u}H_j$, so $H_j=H_{\{j\}}$ and 
by convention $H_\emptyset=\emptyset$.
We identify the set $H_u$ with the function $H_u(\bsx)=\one\{\bsx\in H_u\}$.
Next define $P_u = \e(H_u(\bsx))$ for $\bsx\sim\dnorm(0,I)$. 
We use $-u$ for complements in $1{:}J$ especially 
within subscripts, and $H_u^c(\bsx)$
for the complementary outcome $1-H_u(\bsx)$. 
Let $S(\bsx)=\sum_{j=1}^JH_j(\bsx)$ count the number of rare events  that happen. 
For $s=0,1,\dots,J$,  
let $T_s=\Pr(S=s)$ give the distribution of $S$. 
We use $|u|$ for the cardinality of~$u$.
Our estimand is
\begin{align}\label{eq:incexc}
\mu = P_{1{:}J} = \sum_{|u|>0}(-1)^{|u|-1}P_u,
\end{align}
by inclusion-exclusion.

Here we motivate ALOE as an especially simple mixture sampler.
The mixture components we use are
conditional distributions $q_j =\law(\bsx\mid \omega^\tran_j\bsx\ge\tau_j)$,
for $j=1,\dots,J$.  
They have probability density functions $q_j(\bsx) = p(\bsx)H_j(\bsx)/P_j$.

Let $\alpha_1,\dots,\alpha_J$ be nonnegative numbers
summing to $1$, and $q_\alpha = \sum_{j=1}^J\alpha_jq_j$.
A mixture importance sampling estimate of $\mu$ based
on $n$ draws $\bsx_i\sim q_\alpha$ is
\begin{align}\label{eq:mualpha}
\hat\mu_\alpha
= \frac1n\sum_{i=1}^n
\frac{ p(\bsx_i)H_{1{:}J}(\bsx_i)  }
{\sum_{j=1}^J\alpha_jq_j(\bsx_i)}
= \frac1n\sum_{i=1}^n
\frac{ H_{1{:}J}(\bsx_i)  }{\sum_{j=1}^J\alpha_jH_j(\bsx_i)P_j^{-1}}.
\end{align}
Notice that $p(\bsx_i)$ has conveniently canceled from numerator and denominator.
Although the inclusion-exclusion formula~\eqref{eq:incexc}
contains $2^J-1$ nonzero terms,
each summand in the unbiased estimate in~\eqref{eq:mualpha} can be computed
at cost $O(J)$.


We can induce further cancellation in~\eqref{eq:mualpha}
by making $\alpha_j/P_j$ constant in $j$.
Taking
$\alpha_j = \alpha_j^* \equiv P_j/\bar\mu$, we get
\begin{align}\label{eq:hatmubal}
\hat\mu_{\alpha^*}=
 \frac{\bar\mu}n\sum_{i=1}^n 
\frac{ H_{1{:}J}(\bsx_i) }{\sum_{j=1}^JH_j(\bsx_i)}
= \frac{\bar\mu}n\sum_{i=1}^n 
\frac1{S(\bsx_i)},\quad \bsx_i\simiid q_{\alpha^*},
\end{align}
because $H_{1{:}J}(\bsx)=1$ always holds for $\bsx\sim q_{\alpha^*}$.
The estimate~\eqref{eq:hatmubal} is a multiplicative
adjustment to the union bound $\bar\mu$.
The terms $S(\bsx_i)^{-1}$ range from $1$ to $1/J$ and so
we will never get $\hat\mu_{\alpha^*}$ larger than the union bound
or smaller than $\bar\mu/J$.
This is convenient because $\bar\mu \ge\mu\ge\bar\mu/J$
always holds.

\begin{theorem}\label{thm:muvar}
Let $\hat\mu_{\alpha^*}$ be given by~\eqref{eq:hatmubal}.
Then
\begin{align}\label{eq:unbi}
\e(\hat\mu_{\alpha^*})=\mu,
\end{align}
and 
\begin{align}\label{eq:varub}
\var(\hat\mu_{\alpha^*}) 
=\frac1n\biggl( \bar\mu\sum_{s=1}^J\frac{T_s}s -\mu^2\biggr) 
\le \frac{\mu(\bar\mu-\mu)}{n}. 
\end{align}
\end{theorem}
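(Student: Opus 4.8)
The plan is to compute $\e(\hat\mu_{\alpha^*})$ and $\var(\hat\mu_{\alpha^*})$ directly from the representation \eqref{eq:hatmubal}, exploiting the fact that under $q_{\alpha^*}$ every sample lies in $H_{1:J}$, so $S(\bsx_i)\ge 1$ always and $1/S(\bsx_i)$ is a bounded random variable taking values in $[1/J,1]$. First I would identify the distribution of $S(\bsx)$ when $\bsx\sim q_{\alpha^*}$. Since $q_{\alpha^*}=\sum_j (P_j/\bar\mu)\,q_j$ and $q_j$ is the conditional law of $\bsx$ given $H_j$, for any event $A$ we have $\Pr_{q_{\alpha^*}}(A)=\sum_j (P_j/\bar\mu)\,\Pr(A\mid H_j)=\frac1{\bar\mu}\sum_j\e(\one_A H_j(\bsx))=\frac1{\bar\mu}\,\e\bigl(\one_A\sum_j H_j(\bsx)\bigr)=\frac1{\bar\mu}\,\e(\one_A\,S(\bsx))$, where the last expectations are under the nominal $\dnorm(0,I)$. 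In particular, the $q_{\alpha^*}$-law of $S$ is a size-biased version of its nominal law: $\Pr_{q_{\alpha^*}}(S=s)=s\,T_s/\bar\mu$ for $s=1,\dots,J$ (and this is a genuine probability distribution precisely because $\sum_s s\,T_s=\e(S)=\sum_j P_j=\bar\mu$).

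\textbf{Unbiasedness.} With the size-biasing identity in hand, $\e_{q_{\alpha^*}}\bigl(\bar\mu/S(\bsx)\bigr)=\bar\mu\sum_{s=1}^J \frac1s\cdot\frac{s\,T_s}{\bar\mu}=\sum_{s=1}^J T_s=\Pr(S\ge 1)=\Pr(\cup_j H_j)=\mu$. Averaging over the $n$ IID draws gives \eqref{eq:unbi}. Alternatively one can get this by matching against the inclusion–exclusion formula \eqref{eq:incexc}, but the size-biasing route is cleaner.

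\textbf{Variance.} Because the $\bsx_i$ are IID, $\var(\hat\mu_{\alpha^*})=\frac1n\var_{q_{\alpha^*}}\bigl(\bar\mu/S(\bsx)\bigr)=\frac1n\bigl(\e_{q_{\alpha^*}}(\bar\mu^2/S^2)-\mu^2\bigr)$. Using size-biasing again, $\e_{q_{\alpha^*}}(\bar\mu^2/S^2)=\bar\mu^2\sum_{s=1}^J\frac1{s^2}\cdot\frac{s\,T_s}{\bar\mu}=\bar\mu\sum_{s=1}^J\frac{T_s}{s}$, which gives the exact formula in \eqref{eq:varub}. For the inequality, it suffices to show $\bar\mu\sum_{s=1}^J T_s/s\le \mu\bar\mu$, i.e.\ $\sum_{s=1}^J T_s/s\le \mu=\sum_{s=1}^J T_s$; this is immediate since each $1/s\le 1$. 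That bound alone gives $\var\le \mu\bar\mu/n$, which is weaker than the claimed $\mu(\bar\mu-\mu)/n$. To get the sharper constant I would instead bound $\sum_{s\ge1}T_s/s$ more carefully: write $\e_{q_{\alpha^*}}(\bar\mu^2/S^2)=\bar\mu\sum_{s\ge1}T_s/s$ and note $\bar\mu\sum_{s\ge1}T_s/s-\mu^2 = \bar\mu\sum_{s\ge1}T_s/s-\mu\sum_{s\ge1}T_s = \sum_{s\ge1}T_s\bigl(\tfrac{\bar\mu}{s}-\mu\bigr)$. Since $\bar\mu/s\le\bar\mu$ and $\mu\ge\bar\mu/J$… this still needs an extra idea.

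\textbf{Main obstacle.} The sharp inequality $\var\le\mu(\bar\mu-\mu)/n$ is the crux and does not follow from the trivial $1/s\le1$ bound. The cleanest route I foresee is to compare $1/s$ against an affine function of $s$ on the integers $1,\dots,J$: one has $1/s\le 1-\tfrac{s-1}{J}\cdot c$ for a suitable constant, or more precisely the key inequality is that on $\{1,\dots,J\}$ the convex function $1/s$ lies below the chord through $(1,1)$ and $(J,1/J)$, giving $1/s\le 1-\tfrac{(s-1)(J-1)}{sJ}\cdot\tfrac{s}{\cdots}$ — this needs to be set up so that after multiplying by $T_s$ and summing, $\sum_s s\,T_s=\bar\mu$ is used to eliminate the linear term, leaving exactly $\bar\mu-\mu$ as the cross term. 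Concretely I expect: $\tfrac1s \le \tfrac{J+1}{J} - \tfrac{s}{J}$ fails for $s$ between $1$ and $J$ as convexity goes the wrong way for an upper bound by a chord — rather the chord is a lower bound — so instead the usable fact is $\tfrac1s\ge\tfrac{J+1-s}{J}$, and the needed upper bound must come from a different manipulation, perhaps bounding $\bar\mu\sum T_s/s$ via Cauchy–Schwarz or via $\sum T_s/s\le (\sum T_s)^2/(\sum sT_s) = \mu^2/\bar\mu$ (a consequence of the Cauchy–Schwarz / power-mean inequality applied to the probability weights $T_s/\mu$), which yields exactly $\bar\mu\sum T_s/s\le \mu^2$?? — that would give $\var\le0$, so that inequality is reversed too. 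The correct tool is therefore the reverse: by Cauchy–Schwarz, $\bigl(\sum_s T_s\bigr)^2\le\bigl(\sum_s sT_s\bigr)\bigl(\sum_s T_s/s\bigr)$, i.e.\ $\sum_s T_s/s\ge\mu^2/\bar\mu$; combined with the exact variance formula this only shows $\var\ge0$. Hence the real work is an \emph{upper} bound on $\sum_{s\ge1}T_s/s$, and I expect it to come from the constraint that $T_0=1-\mu$ together with $\sum_{s\ge1}sT_s=\bar\mu$: maximize $\sum_{s\ge1}T_s/s$ over distributions $(T_0,T_1,\dots,T_J)$ subject to these two linear constraints. The maximum is attained at an extreme point supported on at most two values of $s\ge1$ (plus possibly $s=0$); checking the extreme points — e.g.\ mass on $s=1$ and $s=J$, or on $s=1$ alone — should produce the bound $\sum_{s\ge1}T_s/s\le\mu-\mu(\bar\mu-\mu)\cdot(\text{something})$ that yields precisely $\bar\mu\sum T_s/s-\mu^2\le\mu(\bar\mu-\mu)$. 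That linear-programming-on-the-simplex argument is the step I expect to be the main obstacle, and I would present it by exhibiting the optimal two-point distribution explicitly rather than invoking LP duality abstractly.
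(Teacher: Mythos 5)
Your size-biasing computation is correct and is, modulo notation, exactly the paper's proof: the paper evaluates $\e(1/S(\bsx_1))$ and $\e(1/S(\bsx_1)^2)$ under $q_{\alpha^*}$ by writing out the mixture integral and collapsing $\sum_\ell (P_\ell/\bar\mu)\,H_\ell(\bsx)/P_\ell = S(\bsx)/\bar\mu$, which is precisely your identity $\Pr_{q_{\alpha^*}}(S=s)=sT_s/\bar\mu$. Both the unbiasedness claim and the exact variance formula in \eqref{eq:varub} are therefore correctly established in your write-up.

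The ``main obstacle'' you then describe does not exist; you talked yourself out of a finished proof with an arithmetic slip. You correctly state that it suffices to show $\bar\mu\sum_{s\ge1} T_s/s\le\mu\bar\mu$, and you correctly prove this from $1/s\le 1$ and $\sum_{s\ge1}T_s=\mu$. That already completes the argument: the exact variance is $\frac1n\bigl(\bar\mu\sum_{s\ge1} T_s/s-\mu^2\bigr)$, so bounding only the second-moment term by $\mu\bar\mu$ while keeping the $-\mu^2$ that is already present yields $\frac1n(\mu\bar\mu-\mu^2)=\mu(\bar\mu-\mu)/n$, which is the claimed bound. Your next sentence, ``that bound alone gives $\var\le \mu\bar\mu/n$,'' silently drops the $-\mu^2$ term; everything after it --- the chord/convexity attempts, the Cauchy--Schwarz detours, the proposed linear program over two-point distributions --- is chasing a refinement that is not needed. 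The paper's own proof uses exactly the step $T_s/s\le T_s$ and nothing sharper (indeed Section~\ref{sec:isamp} remarks that the bound ``stems from $T_s/s\le T_s$'' and can be conservative). The genuinely sharper bound in terms of $\e(S\mid S>0)\,\e(S^{-1}\mid S>0)$ is the content of the separate Lemma~\ref{lem:prodmomentbound}, not of this theorem.
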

\begin{proof}
See the appendix,
where this is proved for a general set of $J$ events, not necessarily
from Gaussian half-spaces.
\end{proof}

The upper bound~\eqref{eq:varub} involves the unknown $\mu$,
so it is not available for planning purpose when we want to select $n$.
The variance and the coefficient of variation,  $\cv(\hat\mu_{\alpha^*})=
\var(\hat\mu_{\alpha^*})^{1/2}/\mu$
can both be bounded in terms of known quantities $\bar\mu$ and $\umu$ 
from~\eqref{eq:mubounds} as follows.
\begin{corollary}\label{cor:cvbounds}
Let $\hat\mu_{\alpha^*}$ be given by \eqref{eq:hatmubal}.
Then
$\var(\hat\mu_{\alpha^*}) \le \bar\mu^2/(4n)$.
If $\umu\ge\bar\mu/2$ then also
$\var(\hat\mu_{\alpha^*}) \le \umu(\bar\mu-\umu)/n$.
Similarly,
\begin{align}\label{eq:cvbounds}
\cv(\hat\mu_{\alpha^*}) \le 
\frac1{\sqrt{n}}\min\Bigl\{
\sqrt{\bar\mu/\umu-1},
\sqrt{J-1}
\,\Bigr\}.
\end{align}
\end{corollary}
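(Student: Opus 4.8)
The plan is to obtain all three inequalities as elementary consequences of the variance bound
$\var(\hat\mu_{\alpha^*})\le\mu(\bar\mu-\mu)/n$ supplied by Theorem~\ref{thm:muvar}, combined only with the order relations $\umu\le\mu\le\bar\mu$ and the fact that $\bar\mu=\sum_{j=1}^JP_j\le J\umu$ (since each $P_j\le\umu$). No further probabilistic input is required; the entire argument consists of bounding a concave quadratic in $\mu$ and keeping track of the direction in which each substitution goes.

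For the first claim, I would note that $t\mapsto t(\bar\mu-t)$ on $[0,\bar\mu]$ is maximized at $t=\bar\mu/2$ with value $\bar\mu^2/4$, so $\var(\hat\mu_{\alpha^*})\le\mu(\bar\mu-\mu)/n\le\bar\mu^2/(4n)$. For the conditional refinement, observe that if $\umu\ge\bar\mu/2$ then $\mu$ lies in the interval $[\umu,\bar\mu]\subseteq[\bar\mu/2,\bar\mu]$, on which the same quadratic is nonincreasing; hence $\mu(\bar\mu-\mu)\le\umu(\bar\mu-\umu)$ and $\var(\hat\mu_{\alpha^*})\le\umu(\bar\mu-\umu)/n$. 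For the coefficient of variation, divide the variance bound by $\mu^2$:
\begin{align*}
\cv(\hat\mu_{\alpha^*})^2=\frac{\var(\hat\mu_{\alpha^*})}{\mu^2}\le\frac1n\Bigl(\frac{\bar\mu}{\mu}-1\Bigr).
\end{align*}
Because $\mu\ge\umu$, replacing $\mu$ in the denominator by $\umu$ only enlarges the right side, giving $\cv(\hat\mu_{\alpha^*})^2\le(\bar\mu/\umu-1)/n$; and $\bar\mu\le J\umu$ then yields $\cv(\hat\mu_{\alpha^*})^2\le(J-1)/n$. Taking square roots of these two valid bounds and then their minimum produces~\eqref{eq:cvbounds}.

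The only point demanding any care — and the sole ``obstacle'' — is the non-monotonicity of $t\mapsto t(\bar\mu-t)$: one cannot blindly bound $\mu$ above or below and substitute, so the first two claims hinge on identifying the sub-interval on which the quadratic is monotone, and on the hypothesis $\umu\ge\bar\mu/2$ which is exactly what forces $\mu$ into that sub-interval. I would also remark that a sharper coefficient-of-variation bound, namely $\cv(\hat\mu_{\alpha^*})\le\sqrt{(J+J^{-1}-2)/(4n)}$, is available by replacing the crude estimate $\sum_{s\ge1}T_s/s\le\sum_{s\ge1}T_s=\mu$ used inside Theorem~\ref{thm:muvar} with the tangent-line bound $1/s\le1-(s-1)/J$ on $s\in[1,J]$ and then optimizing the resulting quadratic in $\bar\mu/\mu\in[1,J]$, but this is not needed for the corollary as stated.
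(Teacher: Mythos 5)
Your proof is correct and follows essentially the same route as the paper: maximize the concave quadratic $\mu(\bar\mu-\mu)$ over $\mu\in[\,\umu,\bar\mu]$ for the two variance claims, and combine $\mu\ge\umu$ with $\bar\mu\le J\umu$ (equivalently $\mu\ge\bar\mu/J$) for the coefficient-of-variation bound. Your closing remark is also sound, except that $1/s\le 1-(s-1)/J$ is a chord (secant) bound coming from convexity of $s\mapsto 1/s$ rather than a tangent-line bound; the paper obtains that sharper constant instead via Cauchy--Schwarz in Lemma~\ref{lem:prodmomentbound}.
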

\begin{proof}
The claims about $\var(\hat\mu_{\alpha^*})$ follow from
maximizing~\eqref{eq:varub} over $\mu\in[\;\umu,\bar\mu]$.
Next $\cv(\hat\mu_{\alpha^*})^2 = (\bar\mu-\mu)/(n\mu)
=(\bar\mu/\mu-1)/n$. Then~\eqref{eq:cvbounds} follows
because $\mu\ge\umu$ and $\mu\ge\bar\mu/J$.
\end{proof}

A rare event estimator has bounded relative error 
if $\cv(\hat\mu)$ remains bounded as one takes 
the limit in a sequence of problems \citep[Chapter VI]{asmu:glyn:2007}. 
The sequence is typically one where the event of interest becomes increasingly rare,
for instance as $\mu\to 0$ in the present context.
Corollary~\ref{cor:cvbounds} provides a bounded relative error property 
for ALOE in that limit or indeed  in any sequence of problems where $J/n$ is uniformly bounded.

If the product $H_u(\bsx)H^c_{-u}(\bsx)$ equals one then
it means that $\bsx\in H_j$ if and only if $j\in u$.
We use this to write the union bound in a useful way:
$$
\bar\mu = \sum_{j=1}^J\Pr( H_j(\bsx))
= \sum_{j=1}^J\sum_{u\subseteq1{:}J}
\e\bigl( H_u(\bsx)H^c_{-u}(\bsx)\bigr)1_{j\in u}
=\sum_{s=1}^JsT_s.
$$
That is $\bar\mu = \e( S(\bsx))=\mu\e(S(\bsx)\mid S(\bsx)>0)$ and so we may write~\eqref{eq:varub} as
\begin{align}\label{eq:asprodmoments}
\var(\hat\mu_{\alpha^*}) =
\frac{\mu^2}n\Bigl( \e(S\mid S>0)\e(S^{-1}\mid S>0)-1\Bigr).
\end{align}
We will use~\eqref{eq:asprodmoments} in Section~\ref{sec:isamp} to get additional bounds.

\subsection{General Gaussians}\label{sec:generalgaus}

Now suppose that we are given $\bsy\sim\dnorm(\eta,\Sigma)$
and the half-spaces are defined by 
$\gamma_j^\tran \bsy \ge \kappa_j$. 
We assume that $\Sigma$ is nonsingular. If it is not, then 
we can reduce $\bsy$ to a subset of components whose 
variance is nonsingular, and write the other components as linear 
functions of this reduced set. 
We also assume that we can 
afford to take a matrix square root $\Sigma^{1/2}$. 
Now  $\bsx=\Sigma^{-1/2}(\bsy-\eta)\sim\dnorm(0,I)$,
and $\bsy = \eta+\Sigma^{1/2}\bsx$. 
Then the half-spaces are given by 
\begin{align*}
\omega_j^\tran\bsx\ge\tau_j,\quad\text{where}\quad 
\omega_j = 
\frac{\gamma_j^\tran\Sigma^{1/2}}{\sqrt{\gamma_j^\tran\Sigma\gamma_j}},
\quad\text{and}\quad \tau_j 
=\frac{\kappa_j-\gamma_j^\tran\eta}{\sqrt{\gamma_j^\tran\Sigma\gamma_j}},
\end{align*}
for $\bsx\sim\dnorm(0,I)$. 
For rare events, we will have $\kappa_j > \gamma_j^\tran\eta$. 
In some of our motivating contexts one must optimize a cost over $\eta$.
Here we remark that changes to $\eta$ change $\tau_j$ but not $\omega_j$.

\subsection{Sampling algorithms}

We want to sample $\bsx\sim\dnorm(0,I)$ conditionally 
on $\bsx^\tran\omega\ge \tau$ for a unit vector $\omega$
and scalar $\tau$. 
We can use the following steps:
\begin{compactenum}[\quad1)]
\item Sample $\bsz\sim\dnorm(0,I)$.  
\item Sample $u\sim\dustd(0,1)$.  
\item Let $y = \Phi^{-1}(\Phi(\tau)+u(1-\Phi(\tau)))$.  
\item Deliver $\bsx = \omega y + (I-\omega\omega^\tran)\bsz$.  
\end{compactenum} 
These steps can be justified by the analysis in \cite{douc:2010:tr}
who attributes the algorithm to the astrophysics literature.
Step 3 replaces a $\dnorm(0,1)$ distribution for
$y=\bsx^\tran\omega$ by a truncated Gaussian random variable
obtained via inversion.

The algorithm above can be problematic numerically when $\Phi(\tau)$ is 
close to $1$ as it will be for very rare events. 
For instance, in the R language \citep{rlang}, $\Phi(10)$ yields $1$ and 
then $\Phi^{-1}(\Phi(10) + u(1-\Phi(10)))$ yields $\infty$
for any $u$.  Some of our electrical grid examples have $\max_j\tau_j>10^{10}$.
That is, some of the potential failure modes are virtually impossible.

Because $\tau>0$ might be quite large,
we get better numerical stability by sampling $\bsx\sim\dnorm(0,I)$
conditionally on $\bsx^\tran\omega \le -\tau$ and then delivering $-\bsx$. 
The advantage of simulating extreme Gaussians this way was goes back at least to 
\cite{cunn:1969} and may well be older than that.
The steps are as follows:
\begin{compactenum}[\quad1)]
\item Sample $\bsz\sim\dnorm(0,I)$. 
\item Sample $u\sim\dustd(0,1)$. 
\item Let $y = \Phi^{-1}(u\Phi(-\tau))$. 
\item Let $\bsx = \omega y + (I-\omega\omega^\tran)\bsz$. 
\item Deliver $\bsx = -\bsx$. 
\end{compactenum}
Even a very small $u=10^{-12}$
combined with $\tau = 10$ yields 
$$\Phi^{-1}( 10^{-12}\times \Phi(-10)) 
\doteq\Phi^{-1}( 7.62\times 10^{-36}) 
\doteq -12.44 
$$
without any underflow in the R language \citep{rlang}. 
In cases with extremely large $\tau_j$ we will ordinarily get $P_j=0$ and then never
sample conditionally on the corresponding $H_j$.
We compute step 4 via 
$\bsx = \omega y + z - \omega(\omega^\tran\bsz)$ to avoid a potentially 
expensive multiplication $(I-\omega\omega^\tran)\bsz$.



\section{Importance sampling properties}\label{sec:isamp}

As shown in the Appendix, Theorem~\ref{thm:muvar}
holds more generally than the Gaussian case.
In this more general setting,
we have $J$ events, $H_j$, on a common sample space $\cx$
where $\bsx\in\cx$ has probability density $p$.
Event $H_j$ has probability $P_j$. 
As before, we want $\mu = \Pr(H)$ where $H=\cup_jH_j$
and the union bound  is $\mu\le\bar\mu = \sum_jP_j$.  
We assume that $0<\bar\mu<\infty$.
The upper bound only has to be checked if $J=\infty$.
If $\bar\mu=0$, then we know $\mu=0$ without any sampling.

When we sample, we  ensure that
at least one rare event takes place every time, by first picking
an event $H_j$ with probability proportional to $P_j$.
Then we sample $\bsx\in\cx$ conditionally on $H_j$
and find $S(\bsx) = \sum_{\ell=1}^J H_\ell(\bsx)$, the total number of events
that occur.  This includes $H_j$ and so our sample values
always have $S(\bsx_i)\ge1$.
The importance sampling estimate $\hat\mu_{\alpha^*}$
averages $\bar\mu/S(\bsx_i)$ over $n$ independent replicates.
As in the prior section, we use 
$$T_s =\Pr(S(\bsx)=s) = \int_{\real^d}1\{S(\bsx)=s\}p(\bsx)\rd\bsx,$$
for the probability of exactly $s$ events happening.
Then the variance of $\hat\mu$ is given by~\eqref{eq:varub}.

The optimal importance sampling distribution for estimating
$\mu$ is uniform on $H=\{\bsx\mid H(\bsx)=1\}$.
Sampling from this distribution would yield an
estimate with variance zero.  Not surprisingly, we are seldom able
to do that in applications.  The ALOE sampler takes $\bsx\in\cx$
with probability proportional to $S(\bsx)$, so it has support set $H$.

We think that many applications will have events $H_j$ that rarely
co-occur. In that case $S(\bsx)$ is nearly constant at $1$ 
for $\bsx\in H$, and the ALOE sampler is close to the optimal importance sampler.
Other applications may have a few near duplicated events that co-occur
often.  One extreme setting has a common cause that triggers
all $J$ events at once and those events almost never arise
outside of that common situation.   
In that case $S(\bsx)$ is again nearly constant on $H$,
this time usually equal to $J$, and ALOE is again nearly optimal. 

The variance bound $\mu(\bar\mu-\mu)/n$ from~\eqref{eq:varub} can be conservative.  
It stems from $T_s/s\le T_s$, when $s\ge1$.  
If $\Pr_{\alpha^*}( S>1)$ is appreciably large  
then the variance can be meaningfully less than that bound.  
We can improve the variance bound 
by using the following lemma.
\begin{lemma}\label{lem:prodmomentbound}
Let $S$ be a random variable supported on $\{1,2,\dots,J\}$
for $J\in\natu$.
Then
\begin{align}\label{eq:prodmomentbound}
\e(S)\e(S^{-1})
 \le \frac{J+J^{-1}+2}4
\end{align}
with equality if and only if $S\sim\dustd\{1,J\}$.
\end{lemma}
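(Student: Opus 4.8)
The plan is to treat this as a finite-dimensional optimization problem: we want to maximize $\e(S)\e(S^{-1})$ over all probability distributions on $\{1,2,\dots,J\}$. Write $p_k = \Pr(S=k)$ for $k=1,\dots,J$, so that we are maximizing $f(\bsp) = \bigl(\sum_k k p_k\bigr)\bigl(\sum_k p_k/k\bigr)$ over the simplex $\{\bsp \ge 0,\ \sum_k p_k = 1\}$. The objective is a product of two linear functionals, hence convex (it is the restriction of a quadratic form that is positive semidefinite on the relevant subspace, since $xy$ restricted to a line is convex when the line has $x,y$ both increasing — more simply, $f$ is convex because for fixed $\bsp, \bsp'$ the map $t \mapsto f((1-t)\bsp + t\bsp')$ is an upward parabola, as its leading coefficient is $\bigl(\sum_k k(p'_k-p_k)\bigr)\bigl(\sum_k (p'_k-p_k)/k\bigr)$ and one checks this is $\ge 0$ by a rearrangement/Chebyshev-sum argument since $k$ and $1/k$ are oppositely ordered). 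A convex function on a polytope attains its maximum at a vertex, and the vertices of the simplex are the point masses $\delta_k$; but each of those gives $f = 1$, so that is the minimum, not the maximum. So this direct approach shows $f$ is \emph{minimized} at vertices, and we must instead argue the maximum is on the boundary in a different sense.

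The right reduction is: \emph{for fixed $\e(S)$, the quantity $\e(S^{-1})$ is maximized by a two-point distribution supported on $\{1,J\}$.} This is a standard moment-problem fact: among all distributions on $\{1,\dots,J\}$ with a prescribed mean $m$, the one maximizing $\e(g(S))$ for $g$ convex (here $g(s)=1/s$) is supported on the extreme points $\{1,J\}$. Concretely, given any $\bsp$ with mean $m$, replace it by the two-point distribution on $\{1,J\}$ with the same mean; since $1/s$ lies on or below the chord joining $(1,1)$ and $(J,1/J)$ for all $s\in[1,J]$, this replacement can only increase $\e(S^{-1})$, while keeping $\e(S)=m$ fixed. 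Hence it suffices to maximize over two-point distributions: put mass $1-\theta$ on $1$ and mass $\theta$ on $J$, giving
\begin{align*}
\e(S)\e(S^{-1}) = \bigl(1-\theta+\theta J\bigr)\Bigl(1-\theta+\frac{\theta}{J}\Bigr).
\end{align*}

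The final step is to maximize $h(\theta) = (1 + \theta(J-1))(1 - \theta(1-J^{-1}))$ over $\theta\in[0,1]$. This is a downward parabola in $\theta$ (the $\theta^2$ coefficient is $-(J-1)(1-J^{-1}) = -(J-1)^2/J \le 0$), so its maximum is at the critical point. Setting $h'(\theta)=0$ gives, by symmetry of the two factors under $\theta\leftrightarrow 1-\theta$ up to the scaling, $\theta^* = 1/2$, at which $h(1/2) = \bigl(\tfrac{1+J}{2}\bigr)\bigl(\tfrac{1+J^{-1}}{2}\bigr) = \tfrac{(J+1)(1+J^{-1})}{4} = \tfrac{J + 1 + 1 + J^{-1}}{4} = \tfrac{J + J^{-1} + 2}{4}$, which is the claimed bound. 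Equality in the chord step forces the distribution to be supported on $\{1,J\}$ (for $J\ge 2$; the chord is strictly above $1/s$ at interior points), and $\theta=1/2$ is the unique maximizer of $h$, so equality holds iff $S\sim\dustd\{1,J\}$. I expect the main obstacle to be stating the chord/convexity reduction cleanly — namely verifying that the mean-preserving replacement by a $\{1,J\}$-supported law genuinely dominates in $\e(S^{-1})$ — since everything after that is a one-variable calculus exercise; an alternative that avoids the moment-problem language is to directly bound $\e(S^{-1}) \le 1 + (J^{-1}-1)\tfrac{\e(S)-1}{J-1}$ using $1/s \le 1 + (J^{-1}-1)(s-1)/(J-1)$ pointwise on $\{1,\dots,J\}$, then substitute and optimize the resulting quadratic in $\e(S)$.
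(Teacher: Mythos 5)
Your operative argument (the chord reduction followed by the one-variable optimization) is correct, and it takes a genuinely different route from the paper. The paper writes $1-\e(S)\e(S^{-1})=\cov(S,S^{-1})$, applies Cauchy--Schwarz to get $\cov(S,S^{-1})\ge-\sqrt{\var(S)\var(S^{-1})}$, and then invokes the Popoviciu-type bounds $\var(S)\le(J-1)^2/4$ and $\var(S^{-1})\le(1-J^{-1})^2/4$ for variables supported in $[1,J]$ and $[J^{-1},1]$; equality requires all three inequalities to be simultaneously tight, which singles out $\dustd\{1,J\}$. Your route instead fixes $m=\e(S)$, uses convexity of $s\mapsto1/s$ to push $\e(S^{-1})$ up to the chord value $\ell(m)$, notes that the $\{1,J\}$-supported law with mean $m$ attains it exactly, and then maximizes the downward parabola $h(\theta)=(1+\theta(J-1))(1-\theta(1-J^{-1}))=1+\tfrac{(J-1)^2}{J}(\theta-\theta^2)$ at $\theta=1/2$, giving $(J+1)^2/(4J)=(J+J^{-1}+2)/4$. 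Both are short; yours produces the extremal distribution constructively rather than as a case of equality, and it transfers verbatim to the paper's closing generalization to support $[a,b]\subset(0,\infty)$ (replace $1,J$ by $a,b$). Your equality analysis is also sound: for $J\ge2$ strict convexity makes the chord inequality strict off $\{1,J\}$, and strict concavity of $h$ makes $\theta=1/2$ the unique maximizer.

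One caveat about your discarded first paragraph: the claim that $f(\bsp)=\bigl(\sum_k kp_k\bigr)\bigl(\sum_k p_k/k\bigr)$ is convex on the simplex is false, and the parenthetical ``one checks this is $\ge0$ by a rearrangement/Chebyshev-sum argument'' does not hold. For $J=3$ and the direction $d=(1,0,-1)$ one gets $\bigl(\sum_k k d_k\bigr)\bigl(\sum_k d_k/k\bigr)=(-2)(2/3)<0$, so the restriction of $f$ to that line is a \emph{downward} parabola. Indeed, convexity would place the maximum at the vertices and force $\max f=1$, contradicting the bound $(J+1)^2/(4J)>1$ that you go on to prove for $J\ge2$. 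Since you abandon that approach before using it, the proof is unaffected, but the false justification should be deleted rather than left as an unproved aside.
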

\begin{proof}
See the appendix. 
\end{proof}

Lemma~\ref{lem:prodmomentbound} tells us that for $J\ge2$, our worst case setting
is one where half of the time that one or more events happen, exactly
one happens and half of the time, all $J$ of them happen.
While that is not plausible for Gaussian $\bsx$ and large $J$
it can indeed happen for
combinatorial enumeration problems like those of \cite{karp:luby:1983}.
From Theorem~\ref{thm:varub} and Lemma~\ref{lem:prodmomentbound}, we get
\begin{align}\label{eq:lemupper}
\var(\hat\mu_{\alpha^*}) =
\frac{\mu^2}n\biggl(
\biggl(\,\sum_{s=1}^Js\frac{T_s}\mu\biggr)
\biggl(\,\sum_{s=1}^Js^{-1}
\frac{T_s}\mu\biggr)
-1\biggr)
 \le 
\frac{\mu^2}n
\frac{J+J^{-1}-2}4,
\end{align}
because $T_s/\mu$ is a probability distribution on $\{1,2,\dots,J\}$. 


Sometimes we are interested in the probability of sub-events of $H$.
Let $f(\bsx)$ be supported on $H$
and define
$\nu(f) = \nu=\int f(\bsx)p(\bsx)\rd\bsx = 
\int_H f(\bsx)p(\bsx)\rd\bsx$.
We may use ALOE, via
$$
\hat\nu=
\frac{\bar\mu}n\sum_{i=1}^n\frac{f(\bsx_i)}{S(\bsx_i)},
\quad\bsx_i\simiid q_{\alpha^*}.
$$
Then by the same arguments used in the Appendix,
$$
\e(\hat\nu) =\nu
\quad\text{and}\quad
\var(\hat\nu)=
\frac1n\biggl(
\bar\mu
\int_H \frac{f(\bsx)^2p(\bsx)}{S(\bsx)}\rd\bsx
-\nu^2\biggr).
$$
If $f(\bsx)\in\{0,1\}$, then
$\var(\hat\nu) \le \nu(\bar\mu-\nu)/n$.
That is, when $f$ describes a rare event that can only
occur if one or more of the $H_j$ also occur, we can reduce
its Monte Carlo variance from $\nu(1-\nu)/n$
to at most $\nu(\bar\mu-\nu)/n$, in cases where $\bar\mu<1$.

\section{Comparisons}\label{sec:comparison}

Here we consider some numerical examples comparing 
ALOE to {\tt pmvnorm}
from the R package {\tt mvtnorm} \citep{mvtnorm}. 
This package can make use of special properties of the Gaussian 
distribution, and it works in high dimensions. 

We begin by describing {\tt mvtnorm} based on 
\cite{genz:bret:2009} and a personal communication 
from Alan Genz. The program computes 
$$\Pr( \bsa\le\bsy\le \bsb)\equiv 
\Pr( a_j \le y_j \le b_j ,\, j=1,\dots,d)$$
for $\bsy\sim\dnorm(\eta,\Sigma)$,
where $-\infty \le a_j \le b_j \le\infty$ for $j=1,\dots,d$,
and $\Sigma$ can be rank deficient. 
We can use it to compute $\mu = \Pr( \sum_{j=1}^J\one\{\omega_j^\tran\bsx\ge \tau_j\}>0)$
for $\bsx\sim\dnorm(0,I)$ via 
$$
1-\mu 
= \Pr( \Omega^\tran\bsx\le\ct) 
= \Pr( \bsy\le\ct),\quad\bsy\sim\dnorm(0,\Omega^\tran\Omega). 
$$

The code can handle dimensions up to $1000$. In our context, that 
means at most $J=1000$ half-spaces. The dimension $d$ can be higher. 
The related {\tt pmvt} function handles multivariate $t$ random variables. 
The code has three different algorithms in it. 
One from \cite{genz:2004} handles 
 two and three dimensional semi-infinite regions, one from \cite{miwa:etal:2003}
is for dimensions up to $20$ and the rest are handled 
by an algorithm from~\cite{genz:bret:2009}. This latter algorithm uses 
a number of methods. It uses randomized Korobov lattice rules 
as described by \cite{cran:patt:1976} for the first $100$
dimensions, in conjunction with antithetic sampling. There 
are usually $8$ randomizations. For more than $100$ dimensions 
it applies a method from \cite{nied:1972}. 
There are a series of increasing sample sizes in use, and the method 
provides an estimated error ($3.5$ standard errors) based on the randomization. 
The approach is via sequential conditional sampling, after strategically 
ordering the variables (e.g., putting unconstrained ones first). 
The R package calls a FORTRAN program for the computation, so it is very fast. 
We use the default implementation which uses up to $25{,}000$ quadrature points. 

The main finding in comparison to \cite{genz:bret:2009} is that importance sampling is more effective 
when the polytope of interest is the complement of a rare event. 
This is not meant to be a criticism of {\tt pmvnorm}. That code was not 
specifically designed to compute the complement of a rare event. 
The comparison is relevant because we are not aware of alternative code tuned for the 
high dimensional rare event cases that we need, 
and {\tt pmvnorm} is a well regarded and widely available general solution, 
that seemed to us like the best off-the-shelf tool. 

\cite{botev2015efficient} provide a competing method to \cite{genz:bret:2009} 
for  estimating polytope probabilities 
with Gaussian and $t$-distributed data. Like \cite{genz:bret:2009} 
they address problems where   $\bsx\in\cp$ is the rare event,
not $\bsx\not\in \cp$.  We have not compared their method numerically
to ALOE but we expect that like \cite{genz:bret:2009}, it will dominate
ALOE when the event is not rare but not when the event is rare.

A reviewer asked us to compare our method to the
recent work in \cite{ahn:kim:2018} on computing 
expectations over a union of half-spaces.
Their approach is to pick a unit vector $\delta$ uniformly at
random in $\sphd=\{\bsx\in\real^d\mid \bsx^\tran\bsx=1\}$
and average $\Pr(\bsx\in H)$ over the
line $\{y\delta\mid y\in\real\}$.  This line extends in the positive $y$
reaching the set $H$ at distance 
$\min_j\tau_j/\omega_j^\tran\delta_i$ taking the minimum over $j$
with $\omega^\tran\delta>0$.  Should that set of $j$ be empty, the
line never reaches $H$ in the positive $y$ direction.
It has a similarly defined extent in the negative $y$ direction.
The scale $y$ is a symmetric random variable with $y^2\sim\chi^2_{(d)}$
because $\Vert\bsx\Vert^2\sim\chi^2_{(d)}$.
Putting these together we find that their 
directional simulation estimator is
\begin{align}\label{eq:muak}
\hat\mu_{\ak} = \frac1{2n}\sum_{i=1}^n 
\max_{1\le j\le J} \bar G_d\Bigl( \frac{\tau_j^2}{(\omega^\tran\delta_i)^2}\Bigr)\one\{\omega^\tran\delta_i>0\}
+\max_{1\le j\le J} \bar G_d\Bigl( \frac{\tau_j^2}{(\omega^\tran\delta_i)^2}\Bigr)\one\{\omega^\tran\delta_i<0\}
\end{align}
where $\bar G_d$ is one minus the cumulative distribution function of $\chi^2_{(d)}$.
They use their estimator on some ellipsoidally symmetric distributions
generalizing the Gaussian.
We ran directional sampling on the two examples
described next, and it was extremely good on
one of them and extremely inaccurate on the other.

\subsection{Circumscribed polygon}

Let $\cp(J,\tau)\subset\real^2$  be the regular polygon 
of $J\ge3$ sides circumscribed around the circle of radius $\tau>0$. 
This polygon is the intersection of $H_j^c$ where
$H_j=\{\bsx\in\real^2\mid \omega_j^\tran\bsx\ge\tau\}$
where $\omega_j^\tran =(\sin(2\pi j/J), \cos(2\pi j/J))$, for $j=1,\dots,J$.
We want $\mu = \Pr( \bsx\in \cp^c)$ for $\bsx\sim\dnorm(0,I)$. 
Here we know that $\mu\le\Pr( \chi^2_{(2)}\ge \tau^2) = \exp(-\tau^2/2)$. 
Also, the gap between the circle of radius $\tau$ and the circumscribed 
polygon has area $G(J,\tau) = (J\tan(\pi/J)-\pi)\tau^2$. 
The bivariate Gaussian density in this gap is at most 
$\exp(-\tau^2/2)/(2\pi)$. 
Therefore 
$$
\Pr(\bsx\in \cp^c) \ge 
\exp(-\tau^2/2)-G(J,\tau) \exp(-\tau^2/2)/(2\pi) 
$$
that is 
$$
1 \ge\frac{\Pr(\bsx\in \cp^c)}{\exp(-\tau^2/2)} 
\ge 1-\frac{G(J,1)\tau^2}{2\pi}
\doteq  
 1-\frac{\pi^2\tau^2}{6J^2},
$$
for large $J$. 

For $J=360$ and $\tau=6$, we have $\mu \le \exp(-18)\doteq 1.52\times10^{-8}$. 
The lower bound is about $0.9995$ times the upper bound, so we treat the 
upper bound as exact. 
Figure~\ref{fig:sect6} shows histograms of $100$ simulations of $\hat\mu/\mu$
using ALOE and using pvnorm. In this case ALOE is much more accurate. 
The mean square relative error $\e( (\hat\mu/\mu-1)^2)$ is about $800$-fold 
smaller for ALOE than pvnorm.  We also see that pvnorm has high positive
skewness and the histogram of estimates has most of its mass well below the mean.

Table~\ref{tab:othertau} shows summary results for this problem with different 
values of $\tau$.  We see that pvnorm is superior when the event is not 
rare but ALOE is superior for rare events. 
The large error for pvnorm with $\tau =5$ stemmed from a small number 
of outliers among the $100$ trials. 

\begin{figure}[t]
\includegraphics[width=.9\hsize]{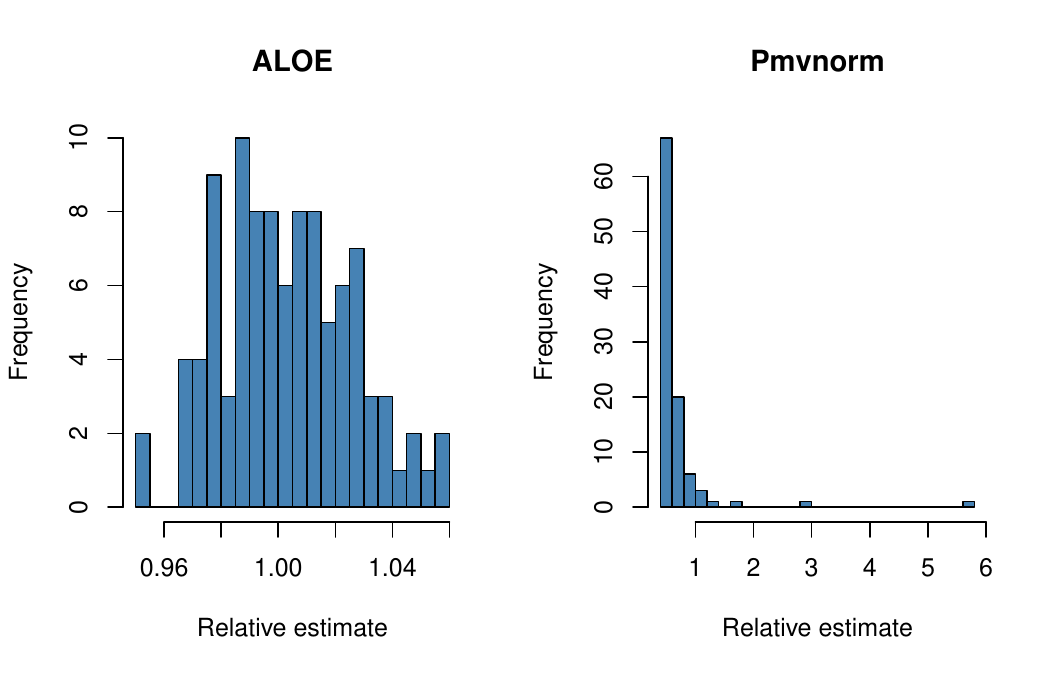}
\caption{\label{fig:sect6}
Results of $100$ estimates of the $\Pr(\bsx\not\in\cp(360,6))$,
divided by $\exp(-6^2/2)$. Left panel: ALOE.
Right panel: pmvnorm. 
}
\end{figure}

\begin{table}
\centering 
\begin{tabular}{ccccc}
\toprule 
    $\tau$ & $\mu$ &$\e((\hat\mu_{\mathrm{ALOE}}/\mu-1)^2)$
&$\e((\hat\mu_{\mathrm{MVN}}/\mu-1)^2)$  \\
\midrule 
2 & 1.35$\times10^{-01}$ & 0.000399 & 9.42$\times10^{-08}$  \\
3 & 1.11$\times10^{-02}$ & 0.000451 & 9.24$\times10^{-07}$  \\
4 & 3.35$\times10^{-04}$ & 0.000549 & 2.37$\times10^{-02}$  \\
5 & 3.73$\times10^{-06}$ & 0.000600 & 1.81$\times10^{+00}$  \\
6 & 1.52$\times10^{-08}$ & 0.000543 & 4.39$\times10^{-01}$  \\
7 & 2.29$\times10^{-11}$ & 0.000559 & 3.62$\times10^{-01}$  \\
8 & 1.27$\times10^{-14}$ & 0.000540 & 1.34$\times10^{-01}$  \\
\bottomrule 
\end{tabular}
\caption{\label{tab:othertau}
Results from $100$ computations of $\Pr( \bsx\not\in \cp(360,\tau))$ for 
various $\tau$. The true mean $\mu$ is very nearly $\exp(-\tau^2/2)$. 
Importance sampling is more accurate for large $\tau$ (rare events),
while {\tt pmvnorm} is more accurate for small $\tau$. 
}
\end{table}

The upper bound in equation~\eqref{eq:varub} is 
 $\var(\hat\mu) \le \mu(\bar\mu-\mu)/n$,
from which $\e( (\hat\mu/\mu-1)^2) \le (\bar\mu/\mu-1)/n$. 
For $\tau=6$ this yields about $0.022$, which is over $20$
times the actual mean squared relative error from Table~\ref{tab:othertau}. 

It is possible that this example is artificially easy for importance sampling,
due to the symmetry.  Whichever half-space $H_j$ we sample, the distribution of 
overlapping half-spaces $H_k$ for $k\ne j$ is the same. Two half-spaces differ
from $H_j$ by a one degree angle, two differ by a two degree angle and so on. 
To get a more varied range of overlap patterns, we replaced angles 
$2\pi j/360$ by angles $2\pi\times p(j)/360$ where $p(j)$ is the $j$'th 
prime among integers up to $360$.  There are $72$ of them, of which the 
largest is $359$. With $\tau=6$ and $100$ replications using $n=1000$
points in importance sampling, we have variance of $\hat p/\exp(-18)$
equal to $0.00077$. The comparable figure for {\tt mvtnorm} is $8.5$. 
There were a few outliers there including one that was more than $6$ times 
the union bound.  The gap between the prime angle polygon and the 
inscribed circle is larger than the one formed by the full polygon. 
Pooling all the importance sampling runs leaves an 
estimate of about $0.94\times \exp(-18)$ for $\mu$.  In this example, we 
see importance sampling working quite well without symmetry. 

The estimator $\hat\mu_{\ak}$ is much better than both ALOE and {\tt pmvnorm} for this problem.
With only a sample of $n=100$ it reached essentially double precision
accuracy, with a standard error of about $5\times 10^{-16}$ on the
symmetric polygon.  For the problem using prime number angles
the standard error was about $3\times 10^{-12}$.
This problem is even more artificially easy for directional simulation method
than the polygon is for ALOE.  The distance from the origin to $H$ is
nearly constant over all angles.

\subsection{High dimensional half-spaces}

The previous example was low dimensional and each of the 
half-spaces sampled had numerous similar ones, differing in angle by a small number 
of degrees. Thus $\mu$ was quite a bit smaller than $\bar \mu$. 
Here we consider a high dimensional setting where the half-spaces 
have less overlap. 

Two uniform random unit vectors $\omega_1$ and $\omega_2$ 
in $\real^d$ are very likely 
to be nearly orthogonal for large $d$. Then 
$\bsx^\tran\omega_j>\tau_j$  are nearly independent events. 
For independent events, we would have 
$$
\Pr(\bsx\not\in\cp) = 1-\prod_{j=1}^J(1-P_j). 
$$
To make $\bsx\not\in\cp$ a rare event, the $P_j$ must be small and 
then the probability above will be close to the union bound. 
Theorem~\ref{thm:varub} predicts good performance 
for importance sampling here. 

For this test $200$ sample problems were constructed. 
The dimensions were chosen by $d\sim\dustd\{20,50,100,200,500\}$. 
Then there were $J\sim\dustd\{d/2,d,2d\}$ constraints chosen 
with uniform random unit vectors $\omega_j\in\real^d$. 
The threshold $\tau$ was chosen so that $\log_{10}$ of 
the union bound was $\dustd[4,8]$, followed by rounding to two significant 
figures. Then $\hat\mu$ was computed by importance sampling 
with $n=1000$ samples, and by {\tt pmvnorm}. 
Figure~\ref{fig:topes} shows the results. The ALOE sampling value 
was always very close to the union bound which in turn is essentially 
equal to what one would see for independent events. The values from 
{\tt pmvnorm} were usually too small but sometimes far too large,
orders of magnitude larger than the union bound.  By construction the intersection 
probabilities are quite rare.  In importance sampling, $77.5$\% of the 
simulations had no intersections among $1000$ trials and the 
others had only a few intersections. Therefore it is clear that 
the probabilities should be close to the union bounds here. 

\begin{figure}[t]
\includegraphics[width=.9\hsize]{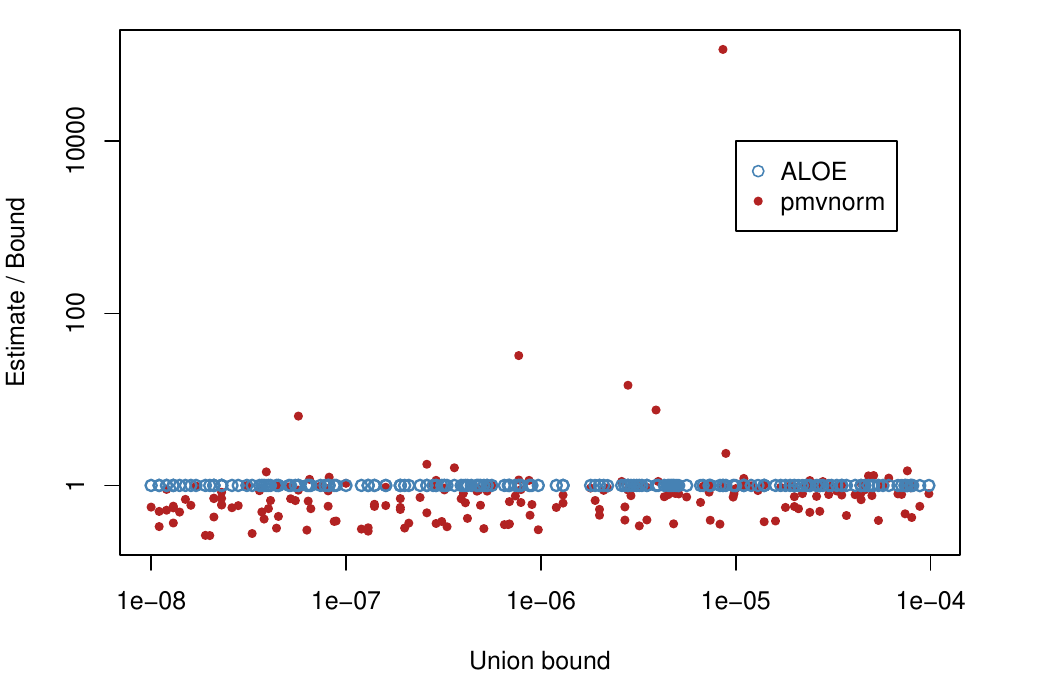}
\caption{\label{fig:topes}
Results of $200$ estimates of the $\mu$
for varying high dimensional problems with nearly independent events. 
}
\end{figure}

We also ran the directional method on these high-dimensional half space
problems.  It did very poorly because the set $H$ only comes close to
the origin at a tiny proportion of the unit vectors $\delta\in\sphd$.
The estimate $\hat\mu_{\ak}$ was usually smaller
than the known lower bound $\umu = \max_{1\le j\le J}\Phi(-\tau_j)$,
sometimes much smaller, in one instance below $10^{-30}\umu$.
Two hundred results are presented in Figure~\ref{fig:direction}.
That estimator was also larger than the known upper bound $\bar\mu$
by as much as $40$-fold in some simulation.  It was less severe
in that regard than {\tt pmvnorm}.
Because it performed so poorly we did
not implement it on the power systems problem in the next section.

\begin{figure}[t]
\includegraphics[width=.9\hsize]{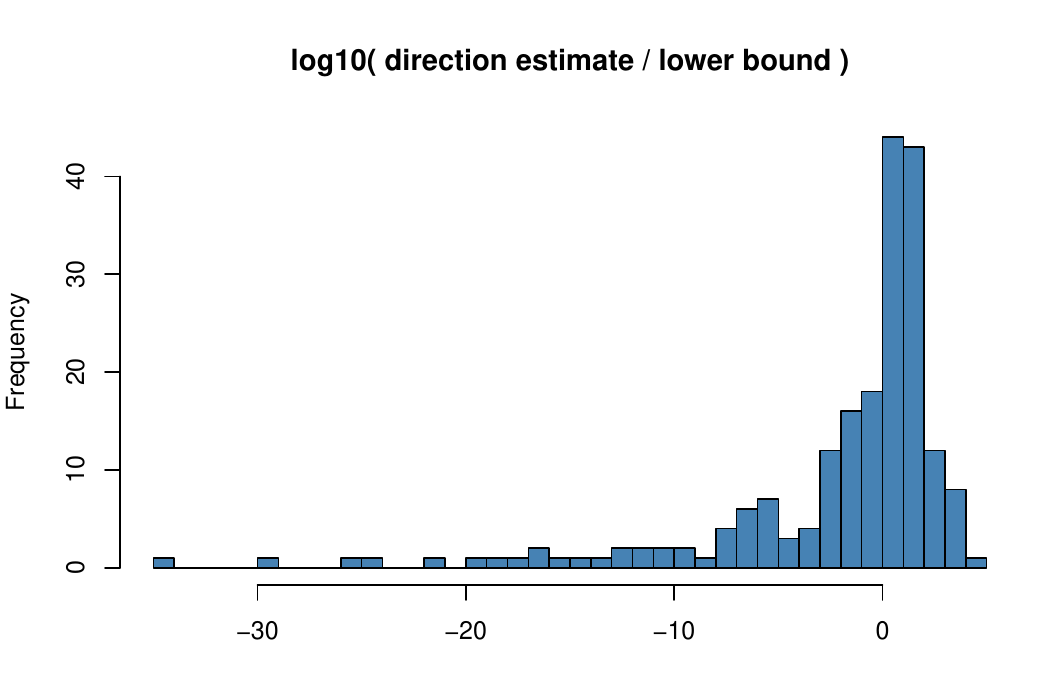}
\caption{\label{fig:direction}
Results of $200$ estimates of $\log_{10}(\mu_{\ak}/\umu)$
for varying high dimensional problems with nearly independent events. 
}
\end{figure}

\section{Power system infeasibility}\label{sec:power}

\subsection{Model}
Our power system models are based on a network of $N$ nodes, called busses.
Some busses put power into the network and others consume power. 
The $M$ edges in the network correspond to power lines between busses.
The network is ordinarily sparse, with $M$ a small multiple of $N$.

The power production at bus $i$ is $p_i$, with negative values
indicating consumption.
For some busses, $p_i$ is tightly controlled and deterministic
in the relevant time horizon. Other busses  have random
$p_i$ corresponding, for example, to variable consumption
levels, that we treat as independent. 
Busses corresponding to wind farms have random power production
levels with meaningfully large correlations.
Our models contain one special bus $S$, called the slack bus, at which the power
is $p_S =-\sum_{i\ne S}p_i$. The total power in the system is
zero because transmission power losses are ignored in the DC 
approximation that  we use.

The power at all busses can be represented by the vector
$p = (p_F^\tran,p_R^\tran,p_S)^\tran$ corresponding to fixed busses,
ordinary random busses (including any correlated ones) and the slack bus.
There are $N_F$ fixed busses, $N_S=1$ slack bus and $N_R=N-N_F-N_S$
random busses apart from the slack bus.  We will use
$1_R$ to denote a column vector of $N_R$ ones, and $I_R$ to
denote the identity matrix of size $N_R$ and similarly for $1_F$ and $I_F$.

The power $p_i$ at bus $i$ must satisfy the constraints
\begin{align}\label{eq:powercon}
 \underline p_i \le p_i \le \overline p_i.
\end{align}
The vector $p$ has a Gaussian distribution, determined entirely by the random
components $p_R\sim\dnorm(\eta_R,\Sigma_{RR})$. 
Therefore in the present context, $p_R$ is
the Gaussian random variable $\bsx$ from Section~\ref{sec:gaussian}.
The fixed components
satisfy $p_F=\eta_F$ and then the slack bus satisfies
$p_S\sim\dnorm(\eta_S,\Sigma_{SS})$ where
$\eta_S = -1_R^\tran \eta_R-1_F^\tran\eta_F$ and
$\Sigma_{SS} = 1_R^\tran\Sigma_{RR}1_R$.
Because all of the randomness comes from $p_R$, we will abbreviate
$\Sigma_{RR}$ to $\Sigma$.

The node to node inductances in the network form
a Laplacian matrix $B$ where $B_{ij}\ne0$ if busses $i$
and $j$ are connected with $B_{ii} = -\sum_{j\ne i}B_{ij}$
(up to rounding).
The Laplacian is symmetric and has one eigenvalue of zero for
a connected network.
It has a pseudo-inverse $B^+$. We partition $B$ and $B^+$ as follows
$$
B = 
\begin{pmatrix}
B_{RR} & B_{RF} & B_{RS}\\
B_{FR} & B_{FF} & B_{FS}\\
B_{SR} & B_{SF} & B_{SS}\\
\end{pmatrix}, \quad\text{and}\quad
B^+ = 
\begin{pmatrix}
B^{RR} & B^{RF} & B^{RS}\\
B^{FR} & B^{FF} & B^{FS}\\
B^{SR} & B^{SF} & B^{SS}\\
\end{pmatrix}.
$$
We also group $B^+$ into three sets of columns via
$B^+ =\begin{pmatrix} B^{\sumdot R} & B^{\sumdot F} & B^{\sumdot S}
\end{pmatrix}$.

The phase at bus $i$ is denoted $\theta_i$.
In our DC approximation of AC power flow,
the phases approximately
satisfy $B\theta =p$.  Given the power vector $p$, we take
$$
\theta = B^+p
=\begin{pmatrix}
B^{RR} & B^{RF} & B^{RS}\\
B^{FR} & B^{FF} & B^{FS}\\
B^{SR} & B^{SF} & B^{SS}\\
\end{pmatrix}
\begin{pmatrix}
p_R\\p_F\\p_S
\end{pmatrix}.
$$
The phase constraints on the network are
\begin{align}\label{eq:phasecon}
|\theta_i-\theta_j| \le \bar\theta_{ij},\quad\text{for $i\ne j$ and  $B_{ij}\ne0$}.
\end{align}
In our examples, all $\bar\theta_{ij} = \bar\theta$ for a single value
$\bar\theta$ such as $\pi/6$ or $\pi/4$.

Let $D\in\{-1,1\}^{M\times N}$ be the incidence matrix.  Each edge in
the network is represented by one row of $D$ with an entry of $+1$ for
one of the busses on that edge and $-1$ for the other. The phase constraints
are $|D\theta|\le \bar\theta$ componentwise.
Now
$$
D\theta = 
D
\begin{pmatrix}
B^{\sumdot R} & B^{\sumdot F} & B^{\sumdot S}
\end{pmatrix}
\begin{pmatrix}
p_R\\p_F\\p_S 
\end{pmatrix}=
D\bigl(
   B^{\sumdot R} p_R 
+ B^{\sumdot F} p_F 
+ B^{\sumdot S} p_S 
\bigr). 
$$

The constraint that $D\theta\le \bar\theta$ for every edge $ij$  can be written
\begin{align*}
DB^{\sumdot R}p_R \le \bar\theta 
-DB^{\sumdot F}p_F
-DB^{\sumdot S}p_S.
\end{align*}
Now $p_S = -1_R^\tran p_R-1_F^\tran p_F$
and $p_F=\eta_F$, so the constraint on $p_R$ is
\begin{align}\label{eq:phasecon1} 
D\bigl( B^{\sumdot R}-B^{\sumdot S}1_R^\tran\bigr) p_R  
\le \bar\theta -  
D\bigl( B^{\sumdot F}-B^{\sumdot S}1_F^\tran\bigr)\eta_F.  
\end{align} 
We also have constraints $D\theta\ge-\bar\theta$ 
which can be written
\begin{align}\label{eq:phasecon2}
D\bigl(B^{\sumdot S}1_R^\tran- B^{\sumdot R}) p_R \bigr)
\le \bar\theta + 
D\bigl( B^{\sumdot F}-B^{\sumdot S}1_F^\tran\bigr)\eta_F. 
\end{align}

Equations~\eqref{eq:phasecon1}  and~\eqref{eq:phasecon2}  supply $2M$
constraints on the random vector $p_R$.
We have also the two slack bus constraints
$p_S\le \overline p_S$ and $-p_S \le -\underline p_S$, that is
\begin{align}\label{eq:slackcon}
-1_R^\tran p_R &\le \overline p_S +1_F^\tran \eta_F\quad\text{and}\quad
1_R^\tran p_R \le -\underline p_S +1_F^\tran \eta_F.
\end{align}
Finally, there are individual constraints on the random busses
$$
p_R \le \overline p_R,\quad\text{and}\quad -p_R \le -\underline p_R,
$$
componentwise.

When we combine all of the constraints on $p_R$, we
get a matrix $\Gamma$ with rows $\gamma_j$
and a vector of upper bounds $\ck$ with entries $\kappa_j$
for which the constraints are 
$\Gamma\times p_R\le \ck$ componentwise. Here those matrices are
$$
\Gamma =
\begin{pmatrix}
\phm I_R\\[.7ex]
-I_R\\[.7ex]
\phm1_R^\tran\\[.7ex]

-1_R^\tran\\[.7ex]
\, \phm D(B^{\sumdot R}-B^{\sumdot S}1_R^\tran)\,\\[.7ex]
\,  -      D(B^{\sumdot R} -B^{\sumdot S}1_R^\tran)\,
\end{pmatrix},\quad\text{and}\quad
\ck = 
\begin{pmatrix}
\phm\overline p_R\\[.7ex]
      -\underline p_R\\[.7ex]
-\underline p_S+1_F^\tran\eta_F\\[.7ex]
\phm\overline p_S+1_F^\tran\eta_F\\[.7ex]
\bar\theta -D\bigl( B^{\sumdot F}-B^{\sumdot S}1_F^\tran\bigr)\eta_F\,\\[.7ex]
\bar\theta +D\bigl( B^{\sumdot F}-B^{\sumdot S}1_F^\tran\bigr)\eta_F\,
\end{pmatrix}.
$$

These are the linear constraints on $p_R$. 
There are $2M$ phase constraints and there are two constraints
for all of the non-fixed busses, including the slack bus.
These constraints can be turned into constraints
$\Omega$ and $\ct$ on a $\dnorm(0,I_R)$
vector as described in Section~\ref{sec:generalgaus}.

\subsection{Examples}

We considered several model electrical grids  included in the MATPOWER
distribution \citep{zimm:muri:thom:2011}.
In each case we modeled violations of the phase constraints,
and used $n=10{,}000$ samples.
For some cases we found that, under our model, phase constraint
violations were not rare events.
In some other cases, the rare event probability was dominated
by one single phase condition:   $\umu=\max_{j=1}^JP_j\approx\sum_{j=1}^JP_j=\bar\mu$.  
For cases like this there
is no need for elaborate computation because we 
know $\mu$ is within a narrow interval
$[\umu,\bar\mu]$.
The interesting cases were of rare events not dominated by a single failure mode.
We investigate two of them.

The first is the Polish winter peak grid of 2383 busses.
There were $d=326$ random (uncontrolled) busses and $J=5772$ phase constraints.
We varied $\bar\omega$ as shown in Table~\ref{tab:winterpeak}.
For $\bar\omega=\pi/7$ constraint violations are not very rare.  At $\bar\omega =\pi/4$
they are quite rare. The estimated coefficient of variation is nearly constant over this range.

\begin{table}\centering 
\begin{tabular}{ccccc}
\toprule 
$\bar\omega$ & $\hat \mu$ & se$/\hat\mu$ & $\umu$ & $\bar\mu$\\
\midrule 
$\pi/4$ & $3.7\times 10^{-23}$ & $0.0024$ & $3.6\times10^{-23}$ & $4.2\times 10^{-23}$ 
\\
$\pi/5$ & $2.6\times 10^{-12}$ & $0.0022$ & $2.6\times10^{-12}$ & $2.9\times 10^{-12}$
\\
$\pi/6$ & $3.9\times 10^{-07}$ & $0.0024$ & $3.9\times10^{-07}$ & $4.4\times 10^{-07}$
\\
$\pi/7$ & $2.0\times 10^{-03}$ & $0.0027$ & $2.0\times 10^{-03}$ & $2.4\times10^{-03}$
\\
\bottomrule 
\end{tabular}
\caption{\label{tab:winterpeak}
Rare event estimates for the winter peak grid. 
$\bar\omega$ is the phase constraint, $\hat\mu$ is the ALOE estimate,
se is the estimated standard error, $\umu$ is the largest single
event probability and $\bar\mu$ is the union bound.
}
\end{table}

The second interesting case is the Pegase 2869 model of \cite{flis:panc:capi:wehe:2013}.  
This has $d=509$ uncontrolled
busses and $J=7936$ phase constraints. It is described as ``power flow
for a large part of the European system''.
The results are shown in Table~\ref{tab:pegase2869}.
We include an unrealistically large bound $\bar\omega=\pi/2$
in that table, to test the limits of our approach.
For $\bar\omega=\pi/2$, the standard error given is zero. One half-space
was sampled $9408$ times, another was sampled $592$ times but in
no instance were there two or more phase violations. The estimate reverts
to the union bound. Getting $0$ doubletons ($S=2)$ among $n=10{,}000$ tries
is compatible with the true probability of a doubleton being as high as $3/n$.
Even if $T_1=.9997$ and $T_2=.0003$ then we would have $\mu = (1-(3/2)\times 10^{-4})\bar\mu$
instead of $\bar\mu$.  We return to this issue in the discussion.

\begin{table}\centering 
\begin{tabular}{ccccc}
\toprule 
$\bar\omega$ & $\hat \mu$ & se$/\hat\mu$ & $\umu$ & $\bar\mu$\\
\midrule 
$\pi/2$ & $3.5\times 10^{-20}$ & $0^*$ & $3.3\times 10^{-20}$ & $3.5\times 10^{-20}$
 \\
$\pi/3$ & $8.9\times 10^{-10}$ & $5.0\times10^{-5}$ & $7.7\times 10^{-10}$ & $8.9\times 10^{-10}$
 \\
$\pi/4$ & $4.3\times 10^{-06}$ & $1.8\times10^{-3}$ & $3.5\times 10^{-06}$ & $4.6\times 10^{-06}$
 \\
$\pi/5$ & $2.9\times 10^{-03}$ & $3.5\times10^{-3}$ & $1.8\times 10^{-03}$ & $4.1\times 10^{-03}$
 \\
\bottomrule 
\end{tabular}
\caption{\label{tab:pegase2869}
Rare event estimates for the Pegase 2869 model.
The columns are as in Table~\ref{tab:winterpeak}.
$^*$The se was $0$ because there were never two or more failures in any sample.
See text for discussion.
}
\end{table}

In addition to the examples above we investigated
IEEE case 14,
IEEE case 300, and
Pegase 1354, which were all dominated by one failure.
We considered a system which included random and correlated
wind power generators, but phase failure was not a rare event in
that system. Pegase 13659 was too large for our computation.
The Laplacian matrix has $37{,}250$ rows and columns and we use the SVD
to compute the generalized inverses we need.
Pegase 9241 was large enough to be very slow and it did
not have rare failures.

In our numerical tables we have used the plain sample variance 
of the importance sampled values to compute a standard 
error.  \cite{botev2015tail} note that the resulting 
standard error can be very inefficient and they propose 
a superior estimator. 
A naive implementation of their method would cost $O(J^4)$
for $J$ linear constraints but they are able to reduce 
that cost to $O(J^3)$. We have not used that method here 
because with $J=5772$ (Polish winter peak) or $J=7936$ (Pegase 2869), 
even $J^3$ is too much to pay for a better variance estimate.

\section{Discussion}\label{sec:discussion}

We have introduced a version of mixture importance sampling
for problems with multiple failure modes.  The sample values
are constrained to have at least one failure and we obtain
bounded relative error.

The ALOE importance sampler is more accurate than
a state of the art code for computing high dimensional Gaussian
probabilities in our rare event setting, but not otherwise.
It is also more reliable than the recent directional sampling
method of \cite{ahn:kim:2018}.  That method gains accuracy
by integrating over a randomly chosen line through the origin in $\real^d$,
but it samples those lines uniformly and not by importance sampling.
It is possible to combine the ideas, sampling $\bsx$ by ALOE
and then integrating over the line defined by the unit vector $\bsx/\Vert\bsx\Vert$.
Preliminary results show this to improve upon ALOE in the power grid problem
however a full discussion would add too much length and detail to the present paper.

We have noticed two areas where ALOE can be improved.
First, if we never see $S\ge2$ concurrent rare events, ALOE
will return the union bound $\hat\mu =\bar\mu$, with an
estimated variance of zero. That variance estimate could be
undesirable even when $\mu/\bar\mu\approx1$.  Because $S(\bsx)$ is supported
on $\{1,1/2,\dots,1/J\}$ we can get an interval estimate of $\mu$
by putting a multinomial prior on this support set and using the
posterior distribution given the sample. 
The solution in \cite{botev2015tail} is attractive when $J$
is not so large.

A second and related issue is that while $\bar\mu \ge\hat\mu\ge\bar\mu/J$
always occurs, it is possible to get $\hat\mu < \umu =\max_{1\le j\le J}P_j$.
We have seen this in cases where $\bar\mu\approx\umu$ because one
of the $P_j$ dominates all of the others combined. In such cases $\mu$, $\bar\mu$
and $\umu$ are all very close together and $\hat\mu$ has small relative standard
deviation.  Improving these two issues is outside the scope of this paper.  They are
both things that happen in cases where we already have a very good idea of the magnitude of $\mu$.
The problem was much less severe for ALOE than it was for directional sampling.
For instance, ALOE will not give $\hat\mu<\bar\mu/J$, while directional sampling can.

In large problems the
algebra can potentially be reduced by ignoring
the very rarest events and simply adding their probabilities
to the estimate.  This will provide a mildly conservative bound.
There is also the possibility of exploiting many generalized upper
and lower bounds on the probability of a union.
See for instance the survey by \cite{yang:alaj:taka:2014}.

\section*{Acknowledgments}

We thank Alan Genz for providing details of the {\tt mvtnorm} package.
We thank Yanbo Tang and Jeffrey Negrea for noticing that
Lemma~\ref{lem:prodmomentbound} could be proved
by Cauchy-Schwarz, which is shorter than our original proof
and also establishes necessity.
Thanks also to Bert Zwart, Jose Blanchet, David Siegmund, Mark Huber, Zdravko Botev and an
anonymous reviewer for pointers to the literature.
ABO thanks the Center for Nonlinear Studies at Los Alamos for their
hospitality while he visited.
The work of YM and MC was supported by CNLS and  DOE/GMLC 2.0 project: ``Emergency Monitoring and controls through new technologies and analytics''. 
ABO was supported by the NSF under grants
DMS-1521145 and DMS-1407397.
\bibliographystyle{chicago}
\bibliography{rare,instanton}

\begin{thebibliography}{}

\bibitem[\protect\citeauthoryear{Adler, Blanchet, and Liu}{Adler
  et~al.}{2008}]{adle:blan:liu:2008}
Adler, R.~J., J.~Blanchet, and J.~Liu (2008).
\newblock Efficient simulation for tail probabilities of {Gaussian} random
  fields.
\newblock In {\em Winter Simulation Conference, 2008}, pp.\  328--336. IEEE.

\bibitem[\protect\citeauthoryear{Adler, Blanchet, and Liu}{Adler
  et~al.}{2012}]{adle:blan:liu:2012}
Adler, R.~J., J.~H. Blanchet, and J.~Liu (2012).
\newblock Efficient {Monte Carlo} for high excursions of {Gaussian} random
  fields.
\newblock {\em The Annals of Applied Probability\/}~{\em 22\/}(3), 1167--1214.

\bibitem[\protect\citeauthoryear{Ahn and Kim}{Ahn and Kim}{2018}]{ahn:kim:2018}
Ahn, D. and K.-K. Kim (2018).
\newblock Efficient simulation for expectations over the union of half-spaces.
\newblock {\em ACM Transactions on Modeling and Computer Simulation
  (TOMACS)\/}~{\em 28\/}(3), 23.

\bibitem[\protect\citeauthoryear{Asmussen and Glynn}{Asmussen and
  Glynn}{2007}]{asmu:glyn:2007}
Asmussen, S. and P.~W. Glynn (2007).
\newblock {\em Stochastic simulation: algorithms and analysis}, Volume~57.
\newblock Springer Science \& Business Media.

\bibitem[\protect\citeauthoryear{Botev and L'Ecuyer}{Botev and
  L'Ecuyer}{2015}]{botev2015efficient}
Botev, Z.~I. and P.~L'Ecuyer (2015).
\newblock Efficient probability estimation and simulation of the truncated
  multivariate student-t distribution.
\newblock In {\em Winter Simulation Conference (WSC), 2015}, pp.\  380--391.
  IEEE.

\bibitem[\protect\citeauthoryear{Botev, Mandjes, and Ridder}{Botev
  et~al.}{2015}]{botev2015tail}
Botev, Z.~I., M.~Mandjes, and A.~Ridder (2015).
\newblock Tail distribution of the maximum of correlated gaussian random
  variables.
\newblock In {\em Proceedings of the 2015 Winter Simulation Conference}, pp.\
  633--642. IEEE Press.

\bibitem[\protect\citeauthoryear{Chertkov, Pan, and Stepanov}{Chertkov
  et~al.}{2011}]{11CPS}
Chertkov, M., F.~Pan, and M.~G. Stepanov (2011).
\newblock Predicting failures in power grids: The case of static overloads.
\newblock {\em IEEE Transactions on Smart Grid\/}~{\em 2\/}(1), 162--172.

\bibitem[\protect\citeauthoryear{Chertkov, Stepanov, Pan, and Baldick}{Chertkov
  et~al.}{2011}]{11CSPB}
Chertkov, M., M.~Stepanov, F.~Pan, and R.~Baldick (2011).
\newblock Exact and efficient algorithm to discover extreme stochastic events
  in wind generation over transmission power grids.
\newblock In {\em 2011 50th IEEE Conference on Decision and Control and
  European Control Conference}, pp.\  2174--2180.

\bibitem[\protect\citeauthoryear{Cornuet, Marin, Mira, and Robert}{Cornuet
  et~al.}{2012}]{corn:mari:mira:robe:2012}
Cornuet, J., J.-M. Marin, A.~Mira, and C.~P. Robert (2012).
\newblock Adaptive multiple importance sampling.
\newblock {\em Scandinavian Journal of Statistics\/}~{\em 39\/}(4), 798--812.

\bibitem[\protect\citeauthoryear{Cranley and Patterson}{Cranley and
  Patterson}{1976}]{cran:patt:1976}
Cranley, R. and T.~N.~L. Patterson (1976).
\newblock Randomization of number theoretic methods for multiple integration.
\newblock {\em SIAM Journal of Numerical Analysis\/}~{\em 13\/}(6), 904--914.

\bibitem[\protect\citeauthoryear{Cunningham}{Cunningham}{1969}]{cunn:1969}
Cunningham, S.~W. (1969).
\newblock Algorithm as 24: From normal integral to deviate.
\newblock {\em Journal of the Royal Statistical Society. Series C\/}~{\em
  18\/}(3), 290--293.

\bibitem[\protect\citeauthoryear{Doucet}{Doucet}{2010}]{douc:2010:tr}
Doucet, A. (2010).
\newblock A note on efficient conditional simulation of {Gaussian}
  distributions.
\newblock Technical report, University of British Columbia.

\bibitem[\protect\citeauthoryear{Elvira, Martino, Luengo, and Bugallo}{Elvira
  et~al.}{2015a}]{elvi:mart:leun:buga:2015}
Elvira, V., L.~Martino, D.~Luengo, and M.~F. Bugallo (2015a).
\newblock Efficient multiple importance sampling estimators.
\newblock {\em IEEE Signal Processing Letters\/}~{\em 22\/}(10), 1757--1761.

\bibitem[\protect\citeauthoryear{Elvira, Martino, Luengo, and Bugallo}{Elvira
  et~al.}{2015b}]{elvi:mart:leun:buga:2015:arxiv}
Elvira, V., L.~Martino, D.~Luengo, and M.~F. Bugallo (2015b).
\newblock Generalized multiple importance sampling.
\newblock {\em arXiv preprint arXiv:1511.03095\/}.

\bibitem[\protect\citeauthoryear{Fliscounakis, Panciatici, Capitanescu, and
  Wehenkel}{Fliscounakis et~al.}{2013}]{flis:panc:capi:wehe:2013}
Fliscounakis, S., P.~Panciatici, F.~Capitanescu, and L.~Wehenkel (2013).
\newblock Contingency ranking with respect to overloads in very large power
  systems taking into account uncertainty, preventive, and corrective actions.
\newblock {\em IEEE Transactions on Power Systems\/}~{\em 28\/}(4), 4909--4917.

\bibitem[\protect\citeauthoryear{Frigessi and Vercellis}{Frigessi and
  Vercellis}{1985}]{frig:verc:1985}
Frigessi, A. and C.~Vercellis (1985).
\newblock An analysis of {Monte Carlo} algorithms for counting problems.
\newblock {\em Calcolo\/}~{\em 22\/}(4), 413--428.

\bibitem[\protect\citeauthoryear{Genz}{Genz}{2004}]{genz:2004}
Genz, A. (2004).
\newblock Numerical computation of rectangular bivariate and trivariate normal
  and t probabilities.
\newblock {\em Statistics and Computing\/}~{\em 14\/}(3), 251--260.

\bibitem[\protect\citeauthoryear{Genz and Bretz}{Genz and
  Bretz}{2009}]{genz:bret:2009}
Genz, A. and F.~Bretz (2009).
\newblock {\em Computation of Multivariate Normal and $t$ Probabilities}.
\newblock Berlin: Springer-Verlag.

\bibitem[\protect\citeauthoryear{Genz, Bretz, Miwa, Mi, Leisch, Scheipl, and
  Hothorn}{Genz et~al.}{2017}]{mvtnorm}
Genz, A., F.~Bretz, T.~Miwa, X.~Mi, F.~Leisch, F.~Scheipl, and T.~Hothorn
  (2017).
\newblock {\em {mvtnorm}: Multivariate Normal and t Distributions}.
\newblock R package version 1.0-6.

\bibitem[\protect\citeauthoryear{Hesterberg}{Hesterberg}{1988}]{hest:1988}
Hesterberg, T.~C. (1988).
\newblock {\em Advances in importance sampling}.
\newblock Ph.\ D. thesis, Stanford University.

\bibitem[\protect\citeauthoryear{Kahn and Marshall}{Kahn and
  Marshall}{1953}]{kahn:mars:1953}
Kahn, H. and A.~Marshall (1953).
\newblock Methods of reducing sample size in {Monte Carlo} computations.
\newblock {\em Journal of the Operations Research Society of America\/}~{\em
  1\/}(5), 263--278.

\bibitem[\protect\citeauthoryear{Karp and Luby}{Karp and
  Luby}{1983}]{karp:luby:1983}
Karp, R.~M. and M.~Luby (1983).
\newblock Monte-carlo algorithms for enumeration and reliability problems.
\newblock In {\em Foundations of Computer Science, 1983., 24th Annual Symposium
  on}, pp.\  56--64. IEEE.

\bibitem[\protect\citeauthoryear{Kersulis, Hiskens, Chertkov, Backhaus, and
  Bienstock}{Kersulis et~al.}{2015}]{15KHCBB}
Kersulis, J., I.~Hiskens, M.~Chertkov, S.~Backhaus, and D.~Bienstock (2015,
  June).
\newblock Temperature-based instanton analysis: Identifying vulnerability in
  transmission networks.
\newblock In {\em 2015 IEEE Eindhoven PowerTech}, pp.\  1--6.

\bibitem[\protect\citeauthoryear{Lafortune and Willems}{Lafortune and
  Willems}{1993}]{lafo:will:1993}
Lafortune, E.~P. and Y.~D. Willems (1993).
\newblock Bidirectional path tracing.
\newblock In {\em Proceedings of CompuGraphics}, pp.\  95--104.

\bibitem[\protect\citeauthoryear{L'Ecuyer, Mandjes, and Tuffin}{L'Ecuyer
  et~al.}{2009}]{lecu:mand:tuff:2009}
L'Ecuyer, P., M.~Mandjes, and B.~Tuffin (2009).
\newblock Importance sampling and rare event simulation.
\newblock In G.~Rubino and B.~Tuffin (Eds.), {\em Rare event simulation using
  Monte Carlo methods}, pp.\  17--38. Chichester, UK: John Wiley \& Sons.

\bibitem[\protect\citeauthoryear{Liu}{Liu}{2001}]{liu:2001}
Liu, J.~S. (2001).
\newblock {\em {Monte Carlo} strategies in scientific computing}.
\newblock New York: Springer.

\bibitem[\protect\citeauthoryear{Miwa, Hayter, and Kuriki}{Miwa
  et~al.}{2003}]{miwa:etal:2003}
Miwa, T., A.~Hayter, and S.~Kuriki (2003).
\newblock The evaluation of general non-centred orthant probabilities.
\newblock {\em Journal of the Royal Statistical Society, Series B\/}~{\em
  65\/}(1), 223--234.

\bibitem[\protect\citeauthoryear{Naiman and Priebe}{Naiman and
  Priebe}{2001}]{naim:prie:2001}
Naiman, D.~Q. and C.~E. Priebe (2001).
\newblock Computing scan statistic p values using importance sampling, with
  applications to genetics and medical image analysis.
\newblock {\em Journal of Computational and Graphical Statistics\/}~{\em
  10\/}(2), 296--328.

\bibitem[\protect\citeauthoryear{Niederreiter}{Niederreiter}{1972}]{nied:1972}
Niederreiter, H. (1972).
\newblock On a number-theoretical integration method.
\newblock {\em Aequationes Math\/}~{\em 8\/}(3), 304--311.

\bibitem[\protect\citeauthoryear{Owen}{Owen}{2013}]{mcbook}
Owen, A.~B. (2013).
\newblock {\em Monte Carlo theory, methods and examples}.

\bibitem[\protect\citeauthoryear{Owen and Zhou}{Owen and
  Zhou}{2000}]{owen:zhou:2000}
Owen, A.~B. and Y.~Zhou (2000).
\newblock Safe and effective importance sampling.
\newblock {\em Journal of the American Statistical Association\/}~{\em
  95\/}(449), 135--143.

\bibitem[\protect\citeauthoryear{Priebe, Naiman, and Cope}{Priebe
  et~al.}{2001}]{prie:naim:cope:2001}
Priebe, C.~E., D.~Q. Naiman, and L.~M. Cope (2001).
\newblock Importance sampling for spatial scan analysis: computing scan
  statistic p-values for marked point processes.
\newblock {\em Computational statistics \& data analysis\/}~{\em 35\/}(4),
  475--485.

\bibitem[\protect\citeauthoryear{{R Core Team}}{{R Core Team}}{2015}]{rlang}
{R Core Team} (2015).
\newblock {\em R: A Language and Environment for Statistical Computing}.
\newblock Vienna, Austria: R Foundation for Statistical Computing.

\bibitem[\protect\citeauthoryear{Sauer and Christensen}{Sauer and
  Christensen}{1984}]{sauer1984active}
Sauer, P.~W. and J.~P. Christensen (1984).
\newblock Active linear {DC} circuit models for power system analysis.
\newblock {\em Electric machines and power systems\/}~{\em 9\/}(2-3), 103--112.

\bibitem[\protect\citeauthoryear{Shi, Siegmund, and Yakir}{Shi
  et~al.}{2007}]{shi:sieg:yaki:2007}
Shi, J., D.~O. Siegmund, and B.~Yakir (2007).
\newblock Importance sampling for estimating p values in linkage analysis.
\newblock {\em Journal of the American Statistical Association\/}~{\em
  102\/}(479), 929--937.

\bibitem[\protect\citeauthoryear{Stott, Jardim, and Alsa{\c{c}}}{Stott
  et~al.}{2009}]{stott2009dc}
Stott, B., J.~Jardim, and O.~Alsa{\c{c}} (2009).
\newblock {DC} power flow revisited.
\newblock {\em IEEE Transactions on Power Systems\/}~{\em 24\/}(3), 1290--1300.

\bibitem[\protect\citeauthoryear{Van~den Bergh, Delarue, and
  D’haeseleer}{Van~den Bergh et~al.}{2014}]{van2014dc}
Van~den Bergh, K., E.~Delarue, and W.~D’haeseleer (2014).
\newblock {DC} power flow in unit commitment models.
\newblock Technical Report WP EN2014-12, KU Leuven.

\bibitem[\protect\citeauthoryear{Veach and Guibas}{Veach and
  Guibas}{1994}]{veac:guib:1994}
Veach, E. and L.~Guibas (1994, June 13--15).
\newblock Bidirectional estimators for light transport.
\newblock In {\em 5th Annual Eurographics Workshop on Rendering}, pp.\
  147--162.

\bibitem[\protect\citeauthoryear{Yang, Alajaji, and Takahara}{Yang
  et~al.}{2014}]{yang:alaj:taka:2014}
Yang, J., F.~Alajaji, and G.~Takahara (2014).
\newblock A short survey on bounding the union probability using partial
  information.
\newblock Technical report, University of Toronto.

\bibitem[\protect\citeauthoryear{Zimmerman, Murillo-S{\'a}nchez, and
  Thomas}{Zimmerman et~al.}{2011}]{zimm:muri:thom:2011}
Zimmerman, R.~D., C.~E. Murillo-S{\'a}nchez, and R.~J. Thomas (2011).
\newblock {MATPOWER:} steady-state operations, planning, and analysis tools for
  power systems research and education.
\newblock {\em IEEE Transactions on power systems\/}~{\em 26\/}(1), 12--19.

\end{thebibliography}

\section*{Appendix: Proofs}

\subsection*{Analysis of self-normalized importance sampling}\label{sec:snisbad}

Here we show that self-normalized importance sampling cannot attain
the smallest variances in a rare event setting.  We build on a remark by
\citet[Chapter 2]{hest:1988} 
and follow a derivation from \citet[Chapter 9]{mcbook}.
The key problem is that
the optimal self-normalized importance sampler for a rare
event places only $1/2$ of its probability in the rare event. 

\citet[Chapter 2]{hest:1988}  quoting \cite{kahn:mars:1953}
notes that the optimal self-normalized importance sampling 
density $q$ for estimating $\mu=\e(f(\bsx))$ when $\bsx\sim p$ is proportional to $|f(\bsx)-\mu|p(\bsx)$. 
Taking $f(\bsx) = \one\{\bsx\in A\}$, we get $q(\bsx) = |\one\{\bsx\in A\}-\mu|p(\bsx)/c$
for some $c>0$. Solving $q(A\cup A^c)=1$ yields $c = 2\mu(1-\mu)$ and then $q(A)=q(A^c)=1/2$.
The self-normalized importance sampler is a ratio estimate $\sum_i f(\bsx_i)p(\bsx_i)/q(\bsx_i)\bigm/\sum_i p(\bsx_i)/q(\bsx_i)$.
Its variance is asymptotic to $\sigma^2_q/n$ for
\begin{align*}
\sigma^2_q=
\e_q\Bigl( \frac{p(\bsx)^2(f(\bsx)-\mu)^2}{q(\bsx)^2}\Bigr) 
\end{align*}
Half of the time $q$ places $\bsx\in A$ yielding
$p(\bsx)/q(\bsx)=(1-\mu)^{-1}/c$ and half of the time $\bsx\in A^c$
with $p(\bsx)/q(\bsx)=\mu^{-1}/c$. Thus
\begin{align*}
\sigma^2_q=
\frac12\Bigl[\frac1{c^2}+\frac1{c^2}\Bigr]=4\mu^2(1-\mu)^2.
\end{align*}
The best possible asymptotic coefficient of variation for SNIS is then
$2(1-\mu)/\sqrt{n}\approx 2/\sqrt{n}$ for rare events.
%

\subsection*{Proof of Theorem~\ref{thm:varub}}

Our motivating problem involves probabilities defined by Gaussian 
content of half-spaces. The approach generalizes to 
estimating the probability of the union of any finite set of events. 
We can also consider a countable number of events when 
the union bound is finite;
see remarks below.
We will assume that each event has positive probability, but 
that condition can also be weakened to a positive union bound,
as described at the end of this section.

For definiteness, we define our sets in terms of indicator 
functions of a random variable $\bsx\in\real^d$ with 
probability density $p$.  
The same formulas work for 
general sample spaces and the density can be with respect 
to an arbitrary base measure. 

We cast our notation into this more general setting as follows. 
For $J\ge1$, and $j=1,\dots,J$, let the  subset $H_j\subset\real^d$ define both 
the event $H_j = \one\{\bsx\in H_j\}$ and the 
indicator function $H_j(\bsx) = 1\{\bsx\in H_j\}$. 
For $u\subseteq\{1,\dots,J\}$
we let $H_u=\cup_{j\in u}H_j$ and 
$H_u(\bsx)=\max_{j\in u}H_j(\bsx)$,
with $H_\emptyset(\bsx)=0$. 
As before $P_u=\e(H_u(\bsx))$,
the number of events is $S(\bsx) =\sum_{j=1}^JH_j(\bsx)$
and $\Pr(S=s)=T_s$. 

Recall that we use $-u$ for complements with respect to $1{:}J$, especially 
within subscripts, and $H_u^c(\bsx)$
for the complementary outcome $1-H_u(\bsx)$. 
Then $H_u(\bsx)H_{-u}^c(\bsx)$ describes the 
event where $\bsx_j\in H_j$ if and only if $j\in u$. 

If $P_j>0$, then the distribution $q_j$ of $\bsx$ given $H_j$
is well defined: $q_j(\bsx) = p(\bsx)H_j(\bsx)/P_j$. 
If $\min_jP_j>0$ then we can define the mixture distribution 
\begin{align}
\label{eq:qalphageneral}
q_{\alpha^*} = \sum_{j=1}^J\alpha_j^*q_j,\quad \alpha^*_j = P_j/\bar \mu,\quad 
\bar\mu =\sum_{j=1}^JP_j. 
\end{align}
For $n\ge1$, our estimator of $\mu = \Pr( S(\bsx)>0)$ is 
\begin{align}\label{eq:hatmugeneral}
\hat\mu_{\alpha^*}=\frac{\bar\mu}n\sum_{i=1}^n\frac1{S(\bsx_i)},\quad 
\bsx_i\simiid q_{\alpha^*}. 
\end{align}

In this section, some equations include both 
randomness due to $\bsx_i\sim q_{\alpha^*}$
and randomness due to $\bsx\sim p$. For section only,
we use $\Pr_{*}$, $\e_{*}$ and $\var_{*}$ when 
the randomness is from observations $\bsx_i\sim q_{\alpha^*}$,
while $\e$, $\Pr$ and $\var$ are with respect to $\bsx\sim p$.

\begin{theorem}\label{thm:varub}
If $1\le J<\infty$ and $\min_jP_j>0$ and $n\ge1$,
then $\hat\mu_{\alpha^*}$ defined by~\eqref{eq:qalphageneral}
satisfies 
$\Pr_*({\bar\mu} /J \le \hat \mu_{\alpha^*}\le \bar\mu)=1$,
\begin{align}\label{eq:unbigen}
\e_*(\hat\mu_{\alpha^*})=\mu,
\end{align}
and 
\begin{align}\label{eq:varubgen}
\var_*(\hat\mu_{\alpha^*}) 
=\frac1n\biggl( \bar\mu\sum_{s=1}^J\frac{T_s}s 
-\mu^2\biggr) 
\le \frac{\mu(\bar\mu-\mu)}{n}. 
\end{align}
\end{theorem}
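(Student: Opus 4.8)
The plan is to run everything through the explicit form of the sampling density. Expanding $q_{\alpha^*}=\sum_{j=1}^J\alpha_j^*q_j$ with $\alpha_j^*=P_j/\bar\mu$ and $q_j(\bsx)=p(\bsx)H_j(\bsx)/P_j$, every $P_j$ cancels and one is left with the identity $q_{\alpha^*}(\bsx)=p(\bsx)S(\bsx)/\bar\mu$, because $\sum_{j=1}^JH_j(\bsx)=S(\bsx)$. Two facts follow at once: $q_{\alpha^*}$ is supported on $H=\{\bsx:S(\bsx)\ge1\}$, and on that set $1\le S(\bsx)\le J$, so each $1/S(\bsx_i)$ is $q_{\alpha^*}$-a.s.\ well defined and lies in $[1/J,1]$; averaging $\bar\mu/S(\bsx_i)$ over $i=1,\dots,n$ gives $\Pr_*(\bar\mu/J\le\hat\mu_{\alpha^*}\le\bar\mu)=1$.

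For unbiasedness I would compute $\e_*(1/S(\bsx))=\int S(\bsx)^{-1}q_{\alpha^*}(\bsx)\rd\bsx$: the factor $S(\bsx)$ in the density cancels the $S(\bsx)^{-1}$, leaving $\bar\mu^{-1}\int_{\{S\ge1\}}p(\bsx)\rd\bsx=\mu/\bar\mu$; multiplying by $\bar\mu$ and using that the $\bsx_i$ are i.i.d.\ yields $\e_*(\hat\mu_{\alpha^*})=\mu$. The variance reduces, by independence, to $(\bar\mu^2/n)\var_*(1/S(\bsx_1))$, and the second moment is handled the same way: $\e_*(1/S(\bsx)^2)=\bar\mu^{-1}\int_{\{S\ge1\}}p(\bsx)/S(\bsx)\rd\bsx=\bar\mu^{-1}\sum_{s=1}^JT_s/s$, splitting the integral over the level sets $\{S=s\}$, each of $p$-mass $T_s$. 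Subtracting $(\mu/\bar\mu)^2$ and multiplying back by $\bar\mu^2/n$ gives $\var_*(\hat\mu_{\alpha^*})=n^{-1}(\bar\mu\sum_{s=1}^JT_s/s-\mu^2)$, and the inequality is then immediate from the crude bound $T_s/s\le T_s$ for $s\ge1$, so that $\sum_{s=1}^JT_s/s\le\sum_{s=1}^JT_s=\Pr(S\ge1)=\mu$.

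I do not expect a substantive obstacle: the proof is a one-line density identity followed by two elementary integrals. The only point needing care is the support bookkeeping — verifying that $q_{\alpha^*}$ gives zero mass to $\{S=0\}$, so that $1/S$ is never infinite and the restriction to $\{S\ge1\}$ in each integral is automatic — together with noting that finiteness of $J$ keeps every sum and integral finite. Relaxing $\min_jP_j>0$ to $\bar\mu>0$, mentioned after the theorem, only requires discarding the null events $H_j$ before forming $q_{\alpha^*}$.
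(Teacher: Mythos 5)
Your proof is correct and follows essentially the same route as the paper's: the identity $q_{\alpha^*}(\bsx)=p(\bsx)S(\bsx)/\bar\mu$ that you isolate up front is exactly the cancellation the paper performs inside its integrals when summing over the mixture components, and the level-set decomposition giving $\sum_{s=1}^J T_s/s$ and the bound $T_s/s\le T_s$ are identical. No gaps.
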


\begin{proof}[Proof of Theorem~\ref{thm:varub}.]
Let $H(\bsx) = \max_{1\le j\le J}H_j(\bsx)$
and $H=\{\bsx\mid H(\bsx)=1\}$. 
If $\bsx_i\sim q_{\alpha^*}$,
then $\bsx_i\in H$ always holds. 
Then $1\le S(\bsx_i)\le J$ holds establishing the bounds on 
$\hat\mu_{\alpha^*}$. 
Next 
\begin{align*}
\e_{*}\Biggl( \biggl(\,\sum_{j=1}^JH_j(\bsx_1)\biggr)^{-1}\Biggr) 
&=
\sum_{\ell=1}^J\frac{P_\ell}{\bar\mu}\int_{H}
\frac{H_\ell(\bsx)P_\ell^{-1}p(\bsx)}{\sum_{j=1}^JH_j(\bsx)}
\rd\bsx 
=\frac1{\bar\mu}\int_{H} p(\bsx)\rd\bsx = \frac{\mu}{\bar\mu},
\end{align*}
establishing~\eqref{eq:unbigen}. 

Because $\hat\mu_{\alpha^*}$ is unbiased its variance is 
\begin{align}\label{eq:itsvar}
\frac1 
n\Biggl( 
\bar\mu^2\e_{*} \Biggl( 
\Biggl(\frac{ H(\bsx_1)}{\sum_{j=1}^JH_j(\bsx_1)}\Biggr)^2 \Biggr) 
-\mu^2 
\Biggr). 
\end{align}
Next 
\begin{align*}
\e_{*}\Bigl( \Bigl(\frac{H(\bsx_1)}{\sum_jH_j(\bsx_1)}\Bigr)^2\Bigr) 
&=
\sum_{j=1}^J 
\frac{P_j}{\bar\mu}
\int_H 
\Bigl(\frac{H(\bsx)}{1+\sum_{\ell\ne j}H_\ell(\bsx)}\Bigr)^2 
\frac{p(\bsx)H_j(\bsx)}{P_j}
\rd\bsx\\
&=
\bar\mu^{-1}
\int_H \frac{p(\bsx)}{\sum_{j=1}^JH_j(\bsx)} \rd\bsx\\
&=
\bar\mu^{-1}\sum_{|u|>0}\frac1{|u|}
\int H_u(\bsx) H_{-u}^c(\bsx) p(\bsx)\rd\bsx\\
&=
\bar\mu^{-1}\sum_{s=1}^J\frac1{s}T_s,
\end{align*}
which, with~\eqref{eq:itsvar},
establishes the equality in~\eqref{eq:varubgen}. 
Next 
\begin{align*}
\bar\mu^{-1}\sum_{s=1}^J\frac{T_s}s 
&
\le 
\bar\mu^{-1}\sum_{s=1}^JT_s 
=\bar\mu^{-1}(1-T_0) 
= \bar\mu^{-1}\mu. 
\end{align*}
Finally 
$\bar\mu^2(\bar\mu^{-1}\mu)-\mu^2 = (\bar\mu-\mu)\mu$,
establishing the upper bound in~\eqref{eq:varubgen}. 
\end{proof}

We can generalize the previous theorem to higher moments. 
Our estimate is an average of $\bar\mu/S(\bsx_i)$. 
The $k$'th moment of this quantity is 
\begin{align*}
\e_*\Bigl(\Bigl(
\frac{\bar\mu}{S(\bsx_1)}
\Bigr)^k\,\Bigr) 
&=\bar\mu^k\sum_{j=1}^J\alpha_j\int_HS(\bsx)^{-k}q_j(\bsx)\rd\bsx\\
&=\bar\mu^k\sum_{j=1}^J\frac{P_j}{\bar\mu}\int_HS(\bsx)^{-k}
\frac{p(\bsx)H_j(\bsx)}{P_j}\rd\bsx\\
&=\bar\mu^{k-1}\int_H S(\bsx)^{1-k}\rd\bsx\\
&=\sum_{s=1}^JT_s\Bigl(\frac{ \bar\mu}s\Bigr)^{k-1}. 
\end{align*}

\begin{remark}
Suppose that one of the $P_j=0$
but $\bar\mu>0$. In this case $q_j$ is not 
well defined. However $q_{\alpha^*}$ places 
probability $0$ on $q_j$, so we may delete 
the $q_j$ component without changing the algorithm 
and then sampling from $q_{\alpha^*}$
is well defined. 
\end{remark}

\begin{remark}
Next suppose that there are infinitely many events,
one for each $j\in\natu$. 
If $\bar \mu\in(0,\infty)$, then $q_{\alpha^*}$ is well defined. 
The same proof goes through, only now sums over $1{:}J$
must be replaced by sums over $\natu$. 
\end{remark}

\subsection*{Proof of Lemma~\ref{lem:prodmomentbound}}

From the Cauchy-Schwarz inequality,
\begin{align}\label{eq:cs}
1-\e(S)\e(S^{-1})
=
\cov(S,S^{-1})
\ge 
-\sqrt{\var(S)\var(S^{-1})}.
\end{align}
Now $\var(S)\le (J-1)^2/4$ and
$\var(S^{-1})\le (1-J^{-1})^2/4$ 
because the support of $S$ is in $[1,J]$.
Therefore
$$
\e(S)\e(S^{-1}) \le 1 + \frac{(J-1)(1-J^{-1})}4=
\frac{ J + J^{-1} +2}4.
$$
Finally, the unique distribution
for which $\var(S)$, $\var(S^{-1})$ and $-\cor(S,S^{-1})$
all attain their maxima is $\dustd\{1,J\}$.
$\qed$
\subsubsection*{Generalization}

The lemma generalizes.
If $\Pr(a\le X\le b)=1$ for 
$0<a\le b<\infty$ then the same argument
yields $\e(X)\e(X^{-1})\le (a/b+b/a+2)/4$.

\end{document}